\theoremstyle{plain}
\newtheorem{theorem}{Theorem}[section]
\theoremstyle{definition}
\theoremstyle{remark}
\newcommand{\Rd}{\mathbb{R}^d}
\newcommand{\R}{\mathbb{R}}
\DeclareMathOperator*{\argmin}{arg\,min}
\icmltitlerunning{Computing Transition Pathways Using Deep RL}
\begin{document}
\twocolumn[
\icmltitle{Computing Transition Pathways for the Study of Rare Events \\ Using Deep Reinforcement Learning}




\begin{icmlauthorlist}
\icmlauthor{Bo Lin}{yyy}
\icmlauthor{Yangzheng Zhong}{yyy}
\icmlauthor{Weiqing Ren}{yyy}
\end{icmlauthorlist}

\icmlaffiliation{yyy}{Department of Mathematics, National University of Singapore, 10 Lower Kent Ridge Road 119076, Singapore}

\icmlcorrespondingauthor{Bo Lin}{matboln@nus.edu.sg}
\icmlcorrespondingauthor{Weiqing Ren}{matrw@nus.edu.sg}


\vskip 0.3in
]



\printAffiliationsAndNotice{}  

\begin{abstract}
Understanding the transition events between metastable states in complex systems is an important subject in the fields of computational physics, chemistry and biology. The transition pathway plays an important role in characterizing the mechanism underlying the transition, for example, in the study of conformational changes of bio-molecules. In fact, computing the transition pathway is a challenging task for complex and high-dimensional systems. 
In this work, we formulate the path-finding task as a cost minimization problem over a particular path space. The cost function is adapted from the Freidlin-Wentzell action functional so that it is able to deal with rough potential landscapes. 
The path-finding problem is then solved using a actor-critic method based on the deep deterministic policy gradient algorithm (DDPG).
The method incorporates the potential force of the system in the policy for generating episodes 
and combines physical properties of the system with the learning process for molecular systems. 
The exploitation and exploration nature of reinforcement learning enables the method to efficiently sample the transition events and compute the globally optimal transition pathway.
We illustrate the effectiveness of the proposed method using three benchmark systems including an extended Mueller system and the Lennard-Jones system of seven particles.
\end{abstract}

\section{Introduction}
Understanding the transition events of dynamical systems is a fundamental but challenging task in many science problems, including chemical reaction, conformational changes of bio-molecules and nucleation events during phase transitions. 
For the system, the transition occurs by crossing a typical energy barrier separating two metastable states in the presence of random perturbations. 
The general disparity between the effective thermal energy and the energy barrier usually leads to a long-waiting period for the system around metastable states before a sudden transition from one state to another emerges. In this setting, we call the transition as a rare event. 
The major interest for the study of rare events is to compute the mechanism underlying the transition events, such as the transition pathway and transition rate. 
In the past, a number of research works have been devoted to developing efficient approaches for investigating the transition mechanism. Well-known methods include the transition path sampling~\cite{bolhuis2002transition,dellago2002transition}, the string method~\cite{weinan2002string,weinan2005finite,ren2005transition,maragliano2006string,ren2007simplified}, the action-based method~\cite{olender1996calculation} and the accelerated molecular dynamics~\cite{voter1997hyperdynamics}. Also, the recent works in Ref.~\cite{khoo2019solving,li2019computing,rotskoff2022active} proposed deep learning based methods for computing the committor function, which is a central object in the transition path theory for understanding the transition mechanism~\cite{ren2005transition,vanden2010transition}. 

The transition pathway in the zero-temperature limit is characterized by the minimum energy path (MEP). The MEP is a path defined in the configuration space along which the tangent of the path is parallel to the potential force. Classical methods for identifying the MEP include the nudged elastic band method~\cite{Newell81}, the string method~\cite{weinan2002string,weinan2005finite,ren2005transition,maragliano2006string,ren2007simplified} and the conjugate peak refinement method~\cite{fischer1992conjugate}. 
The Freidlin-Wentzell theory of large deviations provides a variational characterization of the most likelihood path via a path-based action functional. Based on this characterization, Ref.~\cite{weinan2004minimum,zhou2008adaptive,heymann2008geometric} developed the minimum action methods for computing the minimum action path by minimizing the functional over a path space where the two ends of path are constrained to particular states. Also, the Onsager-Machlup functional, first introduced by Onsager and Machlup~\cite{onsager1953fluctuations}, was used to compute the most probable path associated with a finite-time horizon for systems at finite noise in many applications~\cite{wang2010kinetic,fujisaki2010onsager,du2021graph}. From another perspective, one can recast the variational formulation of the transition path as a finite-horizon optimal control problem~\cite{fleming1977exit,grafke2019numerical}. The control problem was recently solved by a reinforcement learning algorithm~\cite{guo2024deep}.

The situation is quite different when the potential energy surface is rough, which is the general case for practical molecular systems. In this case, the ensemble of transition paths can be characterized by a tube in the configuration space connecting the metastable states inside which the transition occurs with high probability~\cite{ren2005transition}. The rough potential energy surface contains numerous saddle points, most of which are separated by energy barriers comparable to or less than the thermal energy and thus do not act as bottlenecks of the transition. 
As a consequence, the potential force should not be straightforwardly used in the variational characterization of the transition pathway.

In this paper, we propose a deep reinforcement learning method for computing the transition pathway of the system with rough potential landscapes. We formulate the path-finding task as a cost minimization problem over a particular path space. 
To tackle the difficulties arising from the roughness of the potential landscape, we adapt the Freidlin-Wentzell functional and propose a cost function involving an effective force function. In the zero-temperature limit, the effective force simply reduces to the potential force of the system. The formulated path-finding problem is over a constrained sequence of states in the configuration space, 
where the optimal times slices are determined using numerical quadratures. 

In recent years, the advances in reinforcement learning algorithms have made successful applications in many sequential decision making problems, including video games, robotic control and autonomous driving. 
A general class of reinforcement learning algorithms are based on computation of the state-action value function, such as the Deep Q Network (DQN) algorithm~\cite{mnih2013playing,mnih2015human}, 
the deterministic policy gradient (DPG) algorithm~\cite{silver2014deterministic} and the deep DPG (DDPG) algorithm~\cite{lillicrap2015continuous}. In particular, the DQN algorithm utilizes the deep neural networks as the function approximators, where target networks and replay buffer are introduced to stabilize the algorithm. Furthermore, the DDPG algorithm combined the DQN with the actor-critic approach based on the policy gradient to adapt the algorithm to a broader case with continuous and high-dimensional action space. 

In this work, we solve the formulated cost minimization problem for identifying the transition pathway using the actor-critic method based on the DDPG algorithm. The critic and actor functions are parameterized by neural networks. To target the exploration in the region of interest, the method employs a stochastic policy based on the actor function with random noise and the potential force of the system for generating episodes. Also, we utilize target networks and a replay buffer to address the possible instability issue in the learning process. For molecular systems, to enhance the learning efficiency, we incorporate physical properties of the system into the critic and actor networks. 
The exploitation and exploration nature of reinforcement learning together with these techniques establish a stable and efficient algorithm for sampling transition events and computing the globally optimal transition pathway for high-dimensional systems with rough potential landscapes. 
We demonstrate the effectiveness of the method using three numerical examples including a two-dimensional system, an extended Mueller system and the Lennard-Jones system of seven particles.

The paper is organized as follows. In Section~\ref{Method}, we introduce the background and the formulated path-finding problem and propose the reinforcement learning algorithm. The numerical examples are presented in Section~\ref{Examples}. In Section~\ref{Conclusion}, we draw the conclusions.

\section{Method}\label{Method}
We consider a dynamical system in the configuration space $\Rd$, which is modelled by the over-damped Langevin equation:
\begin{equation}\label{SDE}
    d x_t = -\nabla V(x_t) dt + \sqrt{2\epsilon} dW_t,\quad t>0
\end{equation}
where $V(x)$ is a potential function, $W_t$ is a standard Brownian motion and $\epsilon$ is a parameter specifying the strength of the noise which is called the temperature of the system. 
The equilibrium distribution of the system is known as the Boltzmann density function $\rho(x)=Z^{-1}\exp(-\frac{1}{\epsilon}V(x))$, where $Z$ is a normalization constant. 
Consider the general situation where there are two metastable states $A$ and $B$ for the system. 
A transition pathway between the two metastable states $A$ and $B$ is defined as a curve in configuration space connecting the two states.

For a time $T>0$, we denote by $\mathbb{C}_{[0,T]}$ the set of all absolute continuous functions in the configuration space over the time interval $[0,T]$ connecting the two metastable states $A$ and $B$. 
The Freidlin-Wentzell action functional for a given path $\varphi\in\mathbb{C}_{[0,T]}$ is defined as
\begin{equation}\label{FW_action}
    S_T[\varphi] = \int_0^T \frac{1}{4} \left| \varphi'(t)+ \nabla V(\varphi(t))\right|^2 dt.
\end{equation}
According to the Freidlin-Wentzell theory of large deviations~\cite{freidlin2012random}, when the noise $\epsilon$ is sufficiently small, for small number $\delta>0$, the probability that the system~\eqref{SDE} stays in the neighborhood of a given path $\varphi\in\mathbb{C}_{[0,T]}$ over the time interval $[0,T]$ can be estimated by
\begin{equation*}
    \mathbb{P}\left[\max_{0\leqslant t\leqslant T}\|x(t)-\varphi(t)\|<\delta\right]\approx \exp\left(-\frac{S_T[\varphi]}{\epsilon}\right).
\end{equation*}
Therefore, in the zero-temperature limit, the most probable transition path of the system within the time horizon $[0,T]$ can be characterized by a minimizer of the functional~\eqref{FW_action}:
\begin{equation}\label{min_action}
    \varphi^*_T = \argmin_{\varphi\in \mathbb{C}_{[0,T]}} S_T[\varphi].
\end{equation}

The path $\varphi^*_T$ is also referred to as the minimum action path. In the infinite-time limit by letting $T$ goes to infinity, the action functional of the minimum action path, $S_T[\varphi^*_T]$, converges to the infimum value
\begin{equation}\label{min_action_T}
    \inf_{T>0} \inf_{\varphi\in\mathbb{C}_{[0,T]}} S_T[\varphi],
\end{equation}
where the infimum is over all times $T$ and the corresponding path space $\mathbb{C}_{[0,T]}$. 
Furthermore, the graph limit of the minimum action path is a minimum energy path (MEP) of the system. The MEP is, by definition, a curve in the configuration space along which the tangent of the curve is parallel to the potential force $-\nabla V(x)$.

In this work, we aim to find the transition pathway with minimal action in a path space that is close to $\cup_T \mathbb{C}_{[0,T]}$. 
To deal with rough potential landscapes, we propose a cost function by adapting the Freidlin-Wentzell functional. Then we propose a deep reinforcement learning algorithm to solve the path-finding problem.

\subsection{Formulation of the Problem}

For a small number $\gamma>0$, we consider a path space $\mathbb{C}_{\gamma}$ consisting of continuous piecewise linear functions, whose graph is a polygonal chain connecting the two states $A$ and $B$. Each path $\varphi(t)$ in the space $\mathbb{C}_{\gamma}$ is represented by a sequence of states $(z_0,\dots,z_N)$ in the configuration space, together with the corresponding time points $(t_0,\dots,t_{N})$ (See Fig.~\ref{fig0}). The sequence $(z_0,\dots,z_N)$ forms a polygonal chain connecting $A$ and $B$ where the line segments are of equal length $\gamma$ except the last one, {\it i.e.}
\begin{equation}\label{constraint}
\begin{aligned}
    &z_0= A;\quad z_N= B;\\
    &\gamma = \left| z_i-z_{i+1}\right|,\ 0\leqslant i\leqslant N-2;\\
    &\gamma \geqslant \left| z_{N-1}-z_{N}\right|.
\end{aligned}
\end{equation} 

\begin{figure}[!h]
\centering
\includegraphics[width=\linewidth]{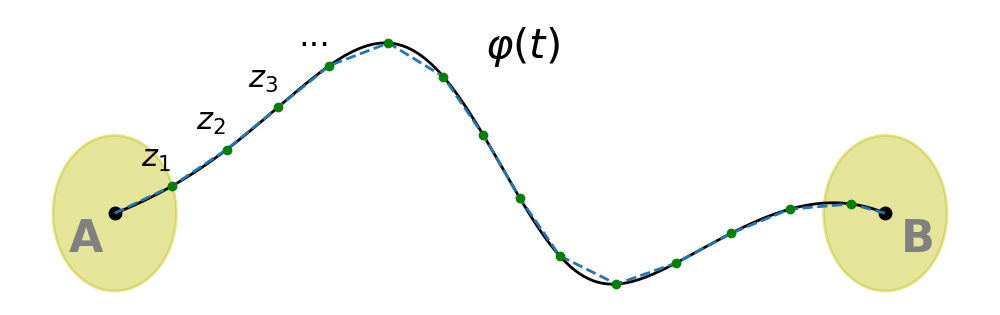}
\caption{Schematic illustration of an absolute continuous path (black solid line) connecting the two metastable states $A$ and $B$, which is approximated by a path $\varphi(t)$ (green dashed line) in $\mathbb{C}_{\gamma}$ represented by a polygonal chain with a sequence of states $(z_0,\dots,z_N)$.}
\label{fig0}
\end{figure}

Over the time interval $[t_i,t_{i+1}]$, the path $\varphi(t)$ is a straight line connecting $z_i$ and $z_{i+1}$ with uniform derivatives,
\begin{equation*}
    \varphi(t)=z_i + \dfrac{z_{i+1}-z_{i}}{h_i}(t-t_i),\quad t_i\leqslant t\leqslant t_{i+1}
\end{equation*}
where the time slice $h_i=t_{i+1}-t_i$. We denote the function $\varphi^i(t)=\varphi(t+t_i)$ for $t\in[0,h_i]$. 
One can show that the set $\cup_{\gamma>0}\mathbb{C}_{\gamma}$ is a dense subset of $\cup_{T>0}\mathbb{C}_{[0,T]}$ (See Appendix~\ref{function_space}). Next, we consider the following action minimization problem restricted to the path space $\mathbb{C}_{\gamma}$ for computing the transition pathway of the system,

\begin{equation}\label{min_Sr}
\begin{aligned}
    &\inf_{\varphi\in\mathbb{C}_{\gamma}}
\sum_{i=0}^{N-1} \tilde{L}(z_i,z_{i+1};h_i),\quad \text{where }
\\
&\tilde{L}(z_i,z_{i+1};h_i)= \int_{0}^{h_i} \left| \frac{z_{i+1}-z_i}{h_i}+\nabla V(\varphi^i(t))\right|^2 dt.\qquad
\end{aligned}
\end{equation}
The optimization problem is, in principle, over a finite set of states $(z_0,\dots,z_N)$ in the configuration space subject to the constraints~\eqref{constraint}, together with the time slices $(h_0,\dots,h_{N-1})$.

{\bf Optimal Time Slices.} In the problem~\eqref{min_Sr}, for a particular sequence of states $(z_0,\dots,z_N)$, each optimal time slice $h_i^*$ is a minimizer to the individual integral $\tilde{L}(z_i,z_{i+1};h_i)$. However, it is not straightforward to obtain an analytical solution for $h_i^*$ in general as the integral involves the potential force. Instead, one can compute an optimal solution $h^{*}_i$ by approximating the integral $\tilde{L}(z_i,z_{i+1};h_i)$ using a mid-point numerical quadrature, 
\begin{align}
    \min_{h_i}\tilde{L}(z_i,&z_{i+1};h_i) \approx \min_{h_i}h_i \left| \frac{z_{i+1}-z_i}{h_i}+\nabla V(z_{i+1/2})\right|^2 \nonumber\\
    &= 2 |z_{i+1}-z_i|\cdot
    |\nabla V(z_{i+1/2})| \nonumber\\
    &\qquad +2 \langle z_{i+1}-z_i ,  \nabla V(z_{i+1/2})\rangle.\label{cost1}
\end{align}

where the mid-point $z_{i+1/2}=(z_i+z_{i+1})/2$ and the minimum value for $\tilde{L}$ is achieved at $h_i^*=|z_{i+1}-z_{i}|\left/|\nabla V(z_{i+1/2})|\right.$. We denote the minimum value in Eq.~\eqref{cost1} by $R(z_i,z_{i+1})$ and refer to it as the cost between $z_i$ and $z_{i+1}$. In principle, the integral $\tilde{L}(z_i,z_{i+1};h_i)$ in Eq.~\eqref{min_Sr} can be approximated using any suitable numerical quadratures (See Appendix~\ref{quadrature}).

Therefore, the task of finding the optimal transition  path in the space $\mathbb{C}_{\gamma}$ as in Eq.~\eqref{min_Sr} can be formulated as a cost minimization problem over the states $(z_0,\dots,z_N)$:
\begin{equation}\label{prob1}
\argmin_{(z_0,\dots,z_N)} \sum_{i=0}^{N-1} R(z_i,z_{i+1}),
\end{equation}
where the states $(z_0,\dots,z_N)$ representing a transition path $\varphi(t)$ are subject to the constraints in Eq.~\eqref{constraint}.

{\bf Effective Force.} The situation is quite different when the potential energy surface is rough, which is the typical case for practical molecular systems. In this case, the rough potential energy surface may contain numerous saddle points, most of which do not act as bottleneck of the transition as potential barriers separating those saddle points are less than or comparable to the thermal energy~\cite{ren2005transition}. As a consequence, the Freidlin-Wentzell path functional as in Eq.~\eqref{FW_action} directly involving the potential force is no longer valid for quantifying the transition events.

For a particular line segment connecting $z_i$ and $z_{i+1}$ in the path, consider the original dynamics~\eqref{SDE} starting from the mid-point $z_{i+1/2}$:
\begin{equation}\label{SDE0}
    dx_t = -\nabla V(x_t) dt + \sqrt{2\epsilon}dW_t,\quad x_0=z_{i+1/2}.
\end{equation}
For $h>0$, integrating both sides of the equation over the time interval $[0,h]$ gives 
\begin{equation*}
 x_h = z_{i+1/2} - \int_{0}^{h} \nabla V(x_t) dt + \xi,\quad \xi\sim \mathcal{N}(0,2\epsilon h).
\end{equation*}
We treat the potential force in the integral as a uniform value around the state $z_{i+1/2}$ and refer to it as the effective force at $z_{i+1/2}$. The formal definition of the effective force at $z_{i+1/2}$ is given by
\begin{equation}\label{def_F}
    F_{\epsilon}(z_{i+1/2}) =
    \mathbb{E}_{\epsilon}\left[\frac{x_h-x_0}{h}: x_0=z_{i+1/2}\right],
\end{equation}
where the expectation is over a state ensemble of the system at time $h$ starting from the state $z_{i+1/2}$ at the temperature $\epsilon$. In practice, we approximate the effective force $F_{\epsilon}(z_{i+1/2})$ using $M$ short trajectories following the dynamics~\eqref{SDE0} over the time interval $[0,h]$ where the state at time $h$ for each trajectory is denoted by $x_j^e$,
\begin{equation}\label{MC}
    F_{\epsilon}(z_{i+1/2}) \approx
    \frac{1}{M}\sum_{j=1}^M \frac{x_j^e-z_{i+1/2}}{h}.
\end{equation}
We then define the cost function
\begin{equation}\label{cost2}
\begin{aligned}
    R_{\epsilon}(z_i,z_{i+1}) &= \min_{h_i} h_i\left| \frac{z_{i+1}-z_{i}}{h_i}- F_{\epsilon}(z_{i+1/2}) \right|^2 \\
    &= 2 |z_{i+1}-z_{i}| \cdot\left|F_{\epsilon}(z_{i+1/2})\right| \\
    &\qquad - 2 (z_{i+1}-z_{i})\cdot F_{\epsilon}(z_{i+1/2}).
\end{aligned}
\end{equation}
Note that the cost function reduces to the Freidlin-Wentzell cost in Eq.~\eqref{cost1} when $\epsilon$ and $h$ tends to zero since the expectation~\eqref{def_F} asymptotically converges to the potential force $-\nabla V(z_{i+1/2})$. For simplicity, we denote by $R_0$ the zero-temperature cost function $R$ as defined in Eq.~\eqref{cost1}. For the system with rough potential landscapes at temperature $\epsilon$, the transition pathway connecting $A$ and $B$ can be computed by solving the cost minimization problem
\begin{equation}\label{prob2}
    \argmin_{(z_0,\dots,z_N)} \sum_{i=0}^{N-1} R_{\epsilon}(z_i,z_{i+1}),
\end{equation}
where the states $(z_0,\dots,z_N)$ in the configuration space are subject to the constraints in Eq.~\eqref{constraint}.

\subsection{Reinforcement Learning Method}\label{RL}

To solve the path-finding problem in Eq.~\eqref{prob2}, we define a Markov decision process with a state space $\mathcal{X}=\Rd$ and continuous action space $\mathcal{A}=\{x\in\Rd:|x|=1\}$. During the process, an agent interacts with the environment at discrete time steps. At each step, the agent takes an action, observes the next state and receives a running cost (or reward). Specifically, we set the transition dynamics and cost function $r(s,a)$ to be deterministic and consistent with the problem~\eqref{prob2}, {\it i.e.} the next state after taking action $a_t$ at state $s_t$ is given by $s_{t+1}=s_t+\gamma\cdot a_t$ and the received cost is defined as $r_t = R_{\epsilon}(s_{t},s_{t+1})$ as in Eq.~\eqref{cost1} for $\epsilon=0$ or Eq.~\eqref{cost2} for $\epsilon>0$. The terminal condition is that the agent arrives in the region $\Omega_{\gamma}=\{x\in\Rd:|x-B|<\gamma\}$. 

The agent's behaviour is described by a policy $\pi$, which gives the action $\pi(s)$ for each state $s$. For a given policy $\pi$, the return from a state $s_t$ is defined as the sum of future costs, $R_t=r_t + \cdots + r_T$, where $T$ denotes the terminal time of the process. Our goal is to learn a policy $\pi^*$ which minimizes the return for each state $s\in \mathcal{X}$. The state-action function $Q(s,a)$ associated with the optimal policy $\pi^*$ is defined as the return after taking action $a$ at $s$ and thereafter following the policy $\pi^*$. Many reinforcement learning algorithms for computing the function $Q(s,a)$ are based on a recursive relationship known as the Bellman equation:
\begin{equation*}
    Q(s_t,a_t) = r(s_t,a_t) + \min_{b\in \mathcal{A}} Q(s_{t+1},b).
\end{equation*}

To deal with the task with continuous action space, here we use an actor-critic approach based on the DDPG algorithm~\cite{lillicrap2015continuous}. 
The critic function is the state-action function $Q(s,a)$ which is parameterized using a neural network $\tilde{Q}_{\theta}(s,a):\mathcal{X}\times\mathcal{A}\rightarrow [0,1]$,
\begin{equation}\label{Q_form}
    Q_{\theta}(s,a) =
    \begin{cases}
        \lambda \tilde{Q}_{\theta}(s,a), & s\notin \Omega_{\gamma} \\
        R_{\epsilon}(s,B), & s\in \Omega_{\gamma}
    \end{cases}
\end{equation}
where $\lambda>0$ is predefined parameter which specifies the range of the critic function as $[0,\lambda]$. The actor function $\mu(s)=\argmin_a Q(s,a)$ corresponding to the optimal policy specifies the optimal action at each state. To construct an actor network mapping states to unit vectors in $\Rd$, we represent the actor function as a normalized vector based the cosines of a hidden actor $\tilde{\mu}_{\theta}:\Rd\rightarrow \Rd$,
\begin{equation}\label{mu_form}
    \mu_{\theta}(s) = \Theta[\cos(\tilde{\mu}_{\theta}(s))],
\end{equation}
where $\Theta[v]=v/|v|$ is the normalization function with an input vector $v\in\Rd$.
Note that the actor function is periodic over the hidden actor $\tilde{\mu}_{\theta}(s)$ with period $2\pi$.

{\bf Data Generation.} We sample data by generating episodes where the initial states are produced from a particular distribution $p(s)$ and subsequently the action of the agent is selected based on a stochastic policy. To facilitate the exploration of possible transition pathways, we add noise to the hidden actor function $\tilde{\mu}_{\theta}$ in the selection of action. Meanwhile, we aim to target the exploration in the region of interest which excludes states in the configuration space with low equilibrium densities $\rho(x)$. Thus, we take the action conducted at step $t$ as  $a_t = \Theta[\tilde{a}_t] $, where
\begin{equation}\label{action_select}
\tilde{a}_t = 
\begin{cases}
  \cos(\tilde{\mu}_{\theta}(s_t)), &\text{with probability } p_1\\
  -\nabla V(s_t), &\text{with probability }p_2\\
  \cos(\tilde{\mu}_{\theta}(s_t)+\xi_t), &\text{with probability } 1-p_1-p_2\\
\end{cases}
\end{equation}
where $\xi_t\in\Rd$ is sampled from a Gaussian distribution and $p_1\geqslant 0$, $p_2\geqslant 0$, $1-p_1-p_2\geqslant 0$.

We use a replay buffer $\mathcal{R}$ with fixed size $N_{R}$ to store the sampled transitions, where the oldest ones will be discarded when new samples are appended to the buffer. The critic function $Q(s,a)$ can be learned off-policy, allowing us to maintain a large-size replay buffer and sample a mini-batch from the buffer for training neural networks at each step.

\begin{algorithm}[t]
\caption{Reinforcement learning for computing the optimal transition pathway at temperature $\epsilon$.}
\label{alg2}
\vspace{1mm}
Initialize critic and actor networks $Q_{\theta}(s,a)$ and $\mu_{\theta}(s)$.\\
Initialize target networks: $Q'_{\theta}\leftarrow Q_{\theta}$, $\mu'_{\theta}\leftarrow \mu_{\theta}$, and initialize replay buffer $\mathcal{R}$.\\
\For{$step$ = $1$ to $maxstep$}{
    Sample initial state $s_0$ from distribution $p(s)$.\\
    \For{$t$ = $0$ to $maxtime$}{
        Select action $a_t$ according to policy~\eqref{action_select};\\
        Update new state $s_{t+1}=s_t+\gamma\cdot a_t$;\\
        Sample $M$ trajectories starting from $s_{t+1/2}=\frac{1}{2}(s_t+s_{t+1})$ with time $h$ and estimate $F_{\epsilon}(s_{t+1/2})$ as in Eq.~\eqref{MC};\\
        Compute cost $r_t=R_{\epsilon}(s_t,s_{t+1})$;\\
        Store $(s_t,a_t,r_t,s_{t+1})$ in $\mathcal{R}$;\\
        Sample a batch $\mathcal{B}=\{(s_t^i,a_t^i,r_t^i,s_{t+1}^i)\}$  from $\mathcal{R}$;\\
        Compute target values $$y_t^i = r_t^i + Q_{\theta}'(s_{t+1}^i,\mu_{\theta}'(s_{t+1}^i));$$
        Update critic $Q_{\theta}$ using the loss function~\eqref{loss_Q};\\
        Update actor $\mu_{\theta}$ using the sampled policy gradients in Eq.~\eqref{policy_gradient};\\
        Exit if terminal condition $s_{t+1}\in\Omega_{\gamma}$ is met.
    }
    \If{\!\!\!$\mod(step,step_{0})=0$}
    {
    Update target networks $Q'_{\theta}\leftarrow Q_{\theta}$, $\mu'_{\theta}\leftarrow \mu_{\theta}$.
    }
}
{\bf Output}: The critic and actor functions $Q_{\theta}(s,a)$, $\mu_{\theta}(s)$.
\end{algorithm}

{\bf Policy Evaluation.} Target neural networks are often employed in reinforcement learning algorithms to address the instability issue when nonlinear and large scale neural networks are used. Here we duplicate the critic and actor networks as the target networks $Q_{\theta}'(s,a)$, $\mu_{\theta}'(s)$ each time after a particular number of steps. The target networks will be used to compute the target values in the temporal-difference (TD) loss function for training the critic network,
\begin{equation}\label{loss_Q}
\begin{aligned}
    L_{Q}(\theta) &=\frac{1}{|\mathcal{B}|}\sum_{(s_t,a_t,r_t,s_{t+1})\in \mathcal{B}} \left| 
    Q_{\theta}(s_{t},a_t) - y_t
    \right|^2,\\
    y_t&=r_t+ Q_{\theta}'(s_{t+1},\mu_{\theta}'(s_{t+1})).
\end{aligned}
\end{equation}
Here we use the stochastic gradient descent to train the neural networks, and $\mathcal{B}=\{(s_t,a_t,r_t,s_{t+1})\}$ is a batch of transitions sampled from the replay buffer.

{\bf Policy Gradient.} The actor network is trained using the gradient of an average return $J_{\mu}$ over the batch states with respect to the actor parameters by applying the chain rule,
\begin{equation}\label{policy_gradient}
\begin{aligned}
    \nabla_{\theta} J_{\mu} &= \frac{1}{|\mathcal{B}|}\sum_{s_t\in \mathcal{B}} \nabla_a Q_{\theta}(s_t,a)|_{a=\mu_{\theta}(s_t)}\cdot \nabla \mu_{\theta}(s_t).
\end{aligned}
\end{equation}

{\bf Physics-based Learning.} 
General molecular systems are usually invariant to certain transformations of the system, such as  translation, rotation of the configuration and re-indexing of particles of the same species in the system. The physical properties can be described by a transformation function $x'=\mathcal{T}(x)$ mapping a given configuration $x$ and its equivalent ones to a representative configuration $x'$. 
We make the learning process respect the physical properties. In the critic and actor networks, the state-action input $(s,a)$ is transformed into $(s',a')$ by $\mathcal{T}$ before fed into the network. For the actor network, we restore the output $\tilde{a}\in \mathcal{A}$ to $\tilde{a}'\in \mathcal{A}$ by the inverse of $\mathcal{T}$, where $\mathcal{A}$ denotes the action space. Learning over an effective manifold of the configuration space can enhance the efficiency of the algorithm.

A pseudocode for describing the reinforcement learning algorithm is presented in Algorithm~\ref{alg2}. Once the actor network is trained, one can compute a transition pathway with states $\{z_t\}_{0\leqslant t\leqslant T}$ by performing the iterative procedure:
\begin{equation*}
    z_{t+1} = z_t + \gamma\cdot \mu_{\theta}(z_t),\ t\geq 0;\quad z_0=A
\end{equation*}
until the terminal condition $z_{T-1}\in\Omega_{\gamma}$ is satisfied and we set $z_T=B$.

{\bf Remark 1}: {\it The proposed method is able to compute the globally optimal transition pathway. Traditional methods for computing the transition pathway including the nudged elastic band method~\cite{Newell81}, the string method~\cite{weinan2002string,ren2007simplified} and the minimum action method~\cite{weinan2004minimum,heymann2008geometric} are suffering from the issue of metastability, 
as they are performing an iterative procedure in the path space starting from a particular path. The solution relies on the initial guess of path. In the general case with multiple transition pathways (for example, in conformational changes of bio-molecules), suitable initial guess is usually not straightforwardly available. In this case, these methods may get trapped in the neighborhood of a locally optimal solution and produce a path that is far away from the globally optimal one. On the other hand, as in the stochastic policy~\eqref{action_select}, the proposed reinforcement learning algorithm is able to explore the entire configuration space due to the randomness term and focus on the transition region of interest based on the optimal policy introduced by the actor function. The exploration-exploitation balance makes the proposed method able to compute the globally optimal path in the whole configuration space.}

{\bf Remark 2}: {\it There are a number of reinforcement learning algorithms developed recently with substantial improvements over the DDPG algorithm, such as the twin delayed deep deterministic policy gradient algorithm (TD3)~\cite{fujimoto2018addressing} and the soft actor-critic algorithm~\cite{haarnoja2018soft}. 
In this work, we propose a framework of formulating the path-finding problem for dynamical systems and employing RL algorithms to solve the problem. Besides the basic DDPG algorithm, the proposed method enjoys the flexibility of combining with advanced techniques from other RL algorithms, for instance, twin Q-functions and target policy smoothing in TD3, to improve the performance of the method. This might be helpful in specific implementations and will be tested in future works.}

\section{Numerical Examples}\label{Examples}
To illustrate the effectiveness of the proposed method, we apply Algorithm~\ref{alg2} to three benchmark systems including a two-dimensional system, a ten-dimensional system with an extended Mueller potential and the Lennard-Jones system of seven particles. 
The first system is adapted from Ref.~\cite{ren2007simplified} which exhibits two typical transition pathways connecting the metastable states. The latter two systems have been extensively studied in previous works~\cite{khoo2019solving,li2019computing,dellago1998efficient,evans2023computing}. To validate the accuracy of the computed transition pathway, one can compute a reference solution of the transition pathway using the string method or transition tube by constructing the committor function landscape. 
Thus, the three examples can be used to benchmark the proposed method.

We parameterize the critic and actor functions using fully-connected neural networks with two hidden layers and put the hyperbolic tangent function $(\tanh)$ as the activation unit in the network. In the critic network $\tilde{Q}_{\theta}$, as specified in Eq.~\eqref{Q_form}, we put a sigmoid function on the output layer and set $\lambda=1000$. In the hidden actor $\tilde{\mu}_{\theta}$ as in Eq.~\eqref{mu_form}, there is no activation function acting on the output layer. 

In the three examples, we generate episodes in parallel with initial states sampled from a mixed distribution. Specifically, $30\%$ of the initial states are taken as the metastable state $A$ while the remaining are sampled from the equilibrium distribution of the system at a particular temperature $\epsilon'$. The latter part of initial states are collected by simulating the Langevin equation~\eqref{SDE} of the system. In the exploration policy~\eqref{action_select}, we take $p_1 = 1/3$, $p_2 = 1/3$ and $\xi_t$ is sampled from the Gaussian distribution $\mathcal{N}(0,\pi/4)$. The neural networks are trained using the stochastic gradient descent with the Adam optimizer~\cite{kingma2014adam} and batch size $5000$. With a batch of data points sampled from the replay buffer, we train the critic and actor networks repeatedly for $10$ times using the temporal-difference loss function~\eqref{loss_Q} and the sampled policy gradient in Eq.~\eqref{policy_gradient}, respectively. The target networks are updated every $10$ training steps. The parameters in Algorithm~\ref{alg2} used for the three examples are shown in the table in  Appendix~\ref{Parameters}.

\subsection{A Two-dimensional System} 
To illustrate the ability of the method for predicting the globally optimal transition pathway, we first consider a two-dimensional (2D) system with two typical transition pathways. The potential function of the system, as adapted from Ref.~\cite{ren2007simplified}, is given by
\begin{equation}\label{VV}
    V(x,y) = \left[(1-x^2-y^2)^2+\frac{y^2}{x^2+y^2}\right](1+g(y)),
\end{equation}
where $g(y)=1/(1+\exp(-y))$ denotes the sigmoid function. For the system, there are two metastable states at $A=(-1,0)$ and $B=(1,0)$, corresponding to two local minima of the potential $V$.


The system has two minimum energy paths (MEP) connecting $A$ and $B$. We first compute reference solutions for the two paths using the string method~\cite{ren2007simplified}. The string is represented by $501$ points in the 2D plane. We perform the string method twice, in which the initial strings are taken as straight lines connecting the two points $(-0.5,0.5)$ and $(0.5,0.5)$ and connecting $(-0.5,-0.5)$ and $(0.5,-0.5)$, respectively. Plots of the two computed MEPs are shown in the upper panel of Fig.~\ref{fig00}, which resemble upper/lower semicircles connecting $A$ and $B$ in the 2D plane. For convenience, we refer to the two curves as upper/lower MEPs. Also shown in the lower panel of Fig.~\ref{fig00} is the potential energy~\eqref{VV} of the system along the two paths, from which one can observe that the upper MEP has a energy barrier of $\Delta V\simeq 2$, whereas lower MEP has a barrier of $\Delta V\simeq 1$. Therefore, the lower MEP is the globally optimal transition pathway between $A$ and $B$, especially when the magnitude of noise is much less than $1$, {\it i.e.} $\epsilon\ll 1$. 

\begin{figure}[t!]
\centering
\includegraphics[width=\linewidth]{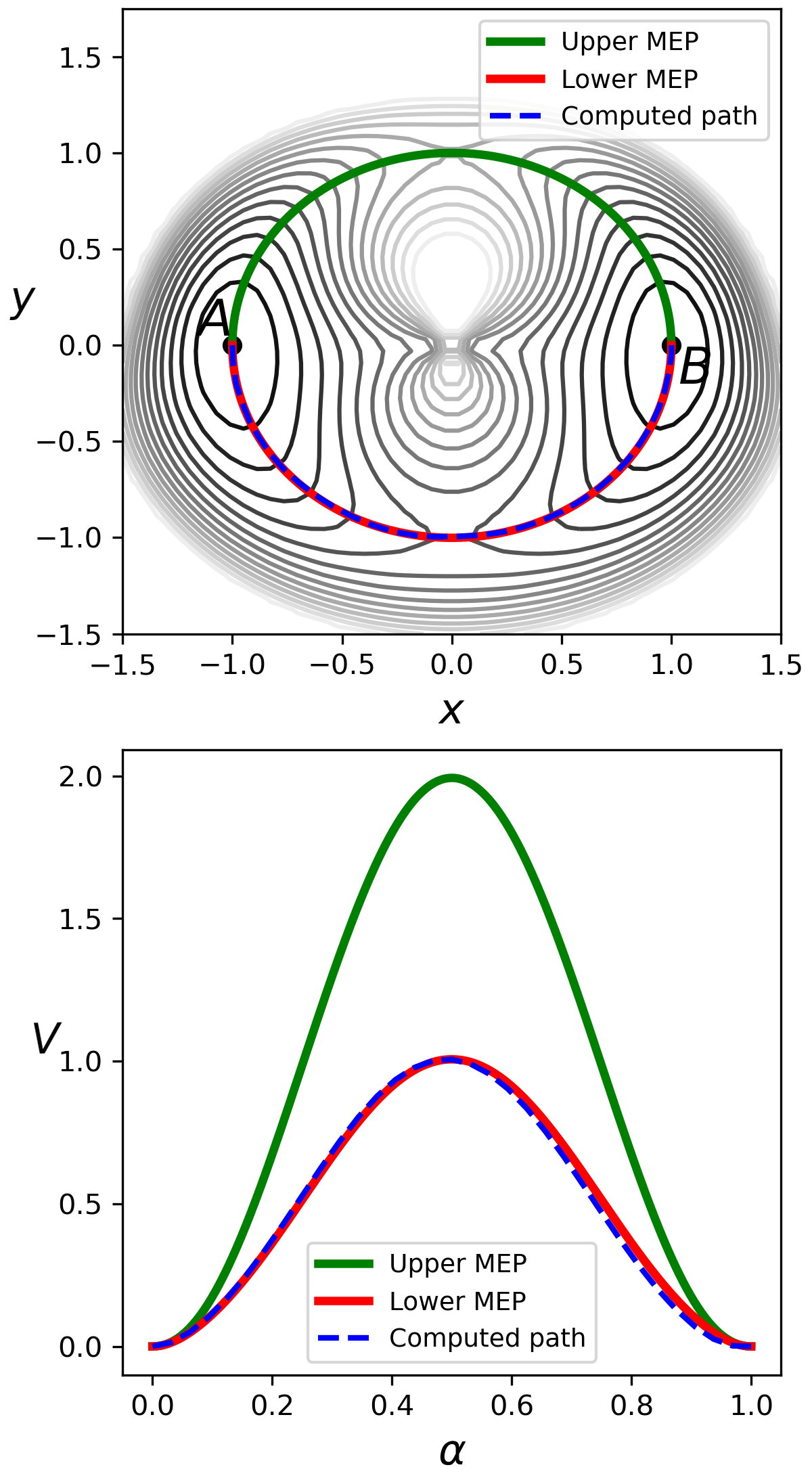}
\caption{Plots of the MEPs computed using the string method and the transition pathway computed from the actor network $\mu_{\theta}(s)$ ({\bf Upper}). Plots of the potential function $V(x)$ along the three paths ({\bf Lower}). The two MEPs are referred to as the upper/lower MEPs. The contour lines in the upper panel indicate the potential function $V(x)$ in Eq.~\eqref{VV}.
}
\label{fig00}
\end{figure}

With the lower MEP, denoted by $\varphi$, one can quantitatively evaluate the path $\varphi_{\theta}$ computed by the proposed method. Note that $\varphi$ is parameterized by the normalized arc-length $\alpha$ of the path. We first re-parameterize $\varphi_{\theta}$ with its normalized arc-length and then evaluate $\varphi_{\theta}$ using the relative error: 
\begin{equation}\label{error}
	e_{\varphi} = \frac{\lVert\varphi_{\theta}(\alpha)-\varphi(\alpha)\rVert}{\lVert\varphi(\alpha)\rVert},
\end{equation}
where the norm is defined as $\lVert\phi(\alpha)\rVert=\sqrt{\frac{1}{n_e}\sum_{i=1}^{n_e} |\phi(i/n_e)|^2}$ for a function $\phi$ using $n_e=100$ discrete points.


We apply Algorithm~\ref{alg2} with $700$ training steps to compute the transition pathway, in $20$ independent runs with random initialization on the critic and actor networks. In the path-finding problem~\eqref{prob2}, we take the path space $C_{\gamma}$ with $\gamma=0.1$ and set $\epsilon=0$. All of the computed paths using the proposed method are close to the lower MEP, as shown in the upper panel of Fig.~\ref{fig00}. The lower panel of the figure shows the potential function along the computed path, which agrees well with the potential along the lower MEP. 
The statistics of relative error for the paths defined in Eq.~\eqref{error} in the $20$ runs is $e_{\varphi}=0.0060\pm0.0020$. The results demonstrate the accuracy of the methods for computing the transition pathway. 

As a comparison, we also apply the string method with random initialization to the system 
in $20$ independent runs. 
On each run, the initial string is taken as a straight line randomly sampled on the 2D plane. Specifically, one end point $x_a$ of the line is sampled from $\mathcal{U}\left([-1.5,0)\times[-1.5,1.5]\right)$ and the other end point $x_b$ is sampled from $\mathcal{U}\left((0,1.5]\times[-1.5,1.5]\right)$, where $\mathcal{U}(\cdot)$ indicates the uniform distribution.
In the total $20$ runs, the string method is convergent to the lower MEP for $11$ runs and to the upper MEP for the remaining runs. 
The results demonstrate that the proposed method statistically outperforms the string method for computing the globally optimal transition pathway.


\subsection{Extended Mueller System}
To illustrate the effectiveness of the method for dealing with high-dimensional systems and rough energy landscapes, we consider the Mueller potential embedded in the ten-dimensional space,
\begin{equation}\label{Example1_V}
	V(x) = V_m(x_1, x_2) + \frac{1}{2\sigma^2}\sum_{i=3}^{10} x_i^2,\quad x \in \mathbb{R}^{10}
\end{equation}
where $V_m(x_1, x_2)$ is the Mueller potential for the first two dimensions,
\begin{equation}
\begin{aligned}
 V_m(x_1, x_2) =& \sum_{i=1}^4 D_i \exp[a_i(x_1 - X_i)^2 \\
 &+ b_i(x_1 - X_i)(x_2 - Y_i) + c_i(x_2 - Y_i)^2] \nonumber \\
&+ \omega \sin(2k\pi x_1)\sin(2k\pi x_2),
\end{aligned}
\end{equation}
and another $8$ harmonic functions are added for the remaining dimensions with the parameter $\sigma$ specifying the strength of the harmonic terms. 
The parameters $\omega$ and $k$ control the roughness of the potential landscape. We take the two metastable states as $A=(-0.558,1.441,0,\dots,0)$ and $B=(0.623,0.028,0,\dots,0)$, which corresponds to two local minimum points of the potential function $V(x)$.

In this example, we take the parameters except $\omega$, $k$ from Ref.~\cite{li2019computing} and consider two cases for the potential function~\eqref{Example1_V} with different values of $\omega$. In the first case ($\omega=0$), the potential landscape is smooth and we apply the method to compute the transition pathway for the system in the zero-temperature limit. In the second case ($\omega>0$), the potential landscape is rather rugged with numerous saddle points and we compute the transition pathway at the temperature $\epsilon=10$.

\begin{figure}[t!]
\centering
\includegraphics[width=\linewidth]{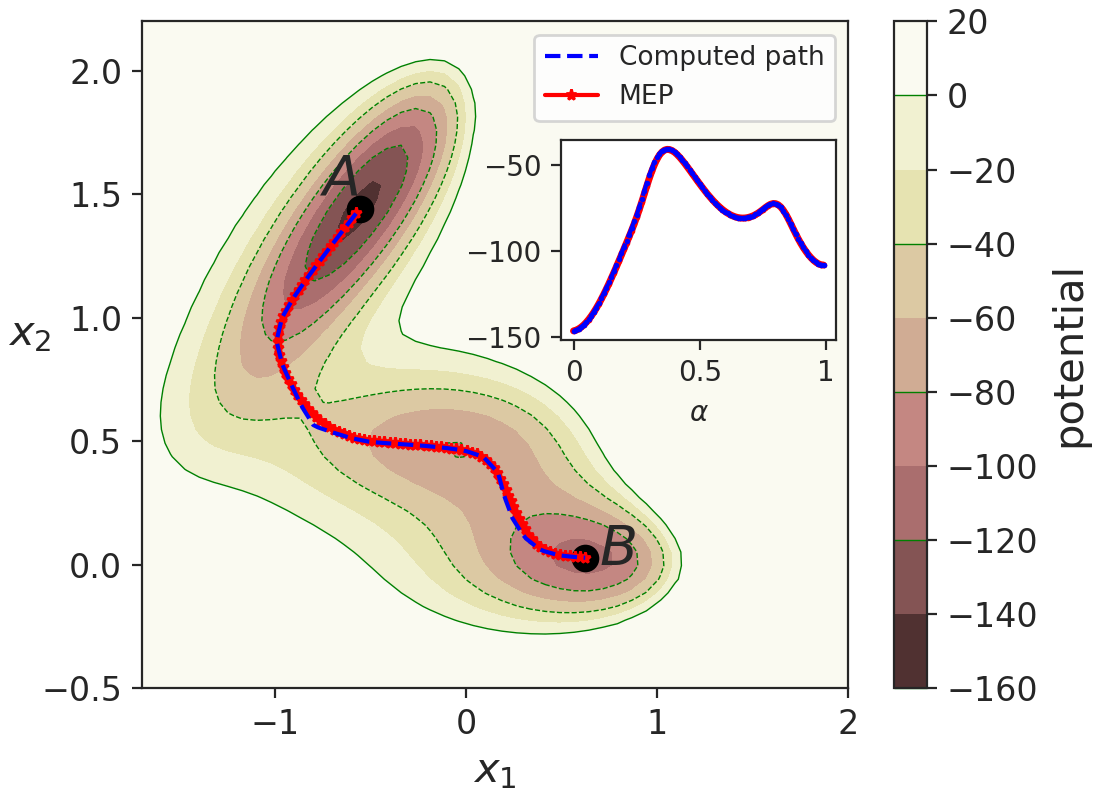}
\caption{Plots of the computed transition pathway $\varphi_{\theta}(\alpha)$ between the metastable states $A$ and $B$ and the minimum energy path (MEP) $\varphi(\alpha)$ which are projected on the $(x_1,x_2)$ plane. The inset plot shows the potential function along the two paths. The contour lines indicate the potential function $V(x)$ in Eq.~\eqref{Example1_V}.}
\label{fig1c}
\end{figure}

\begin{figure}[t!]
	\centering
    \includegraphics[width=.9\linewidth]{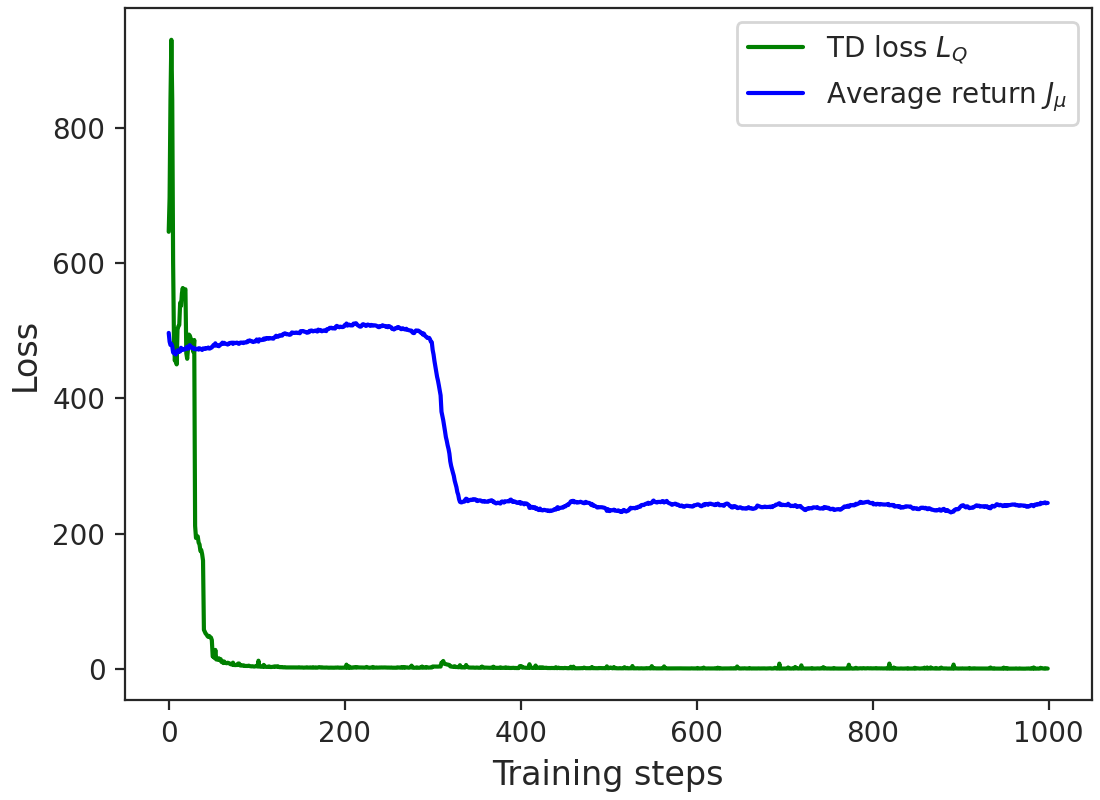}
	\caption{Plots of the temporal-difference (TD) loss function $L_Q$ and the average return $J_{\mu}$ versus the training steps in Algorithm~\ref{alg2}.}
	\label{fig1a}
\end{figure}

\begin{figure}[t!]
	\centering
\includegraphics[width=.9\linewidth]{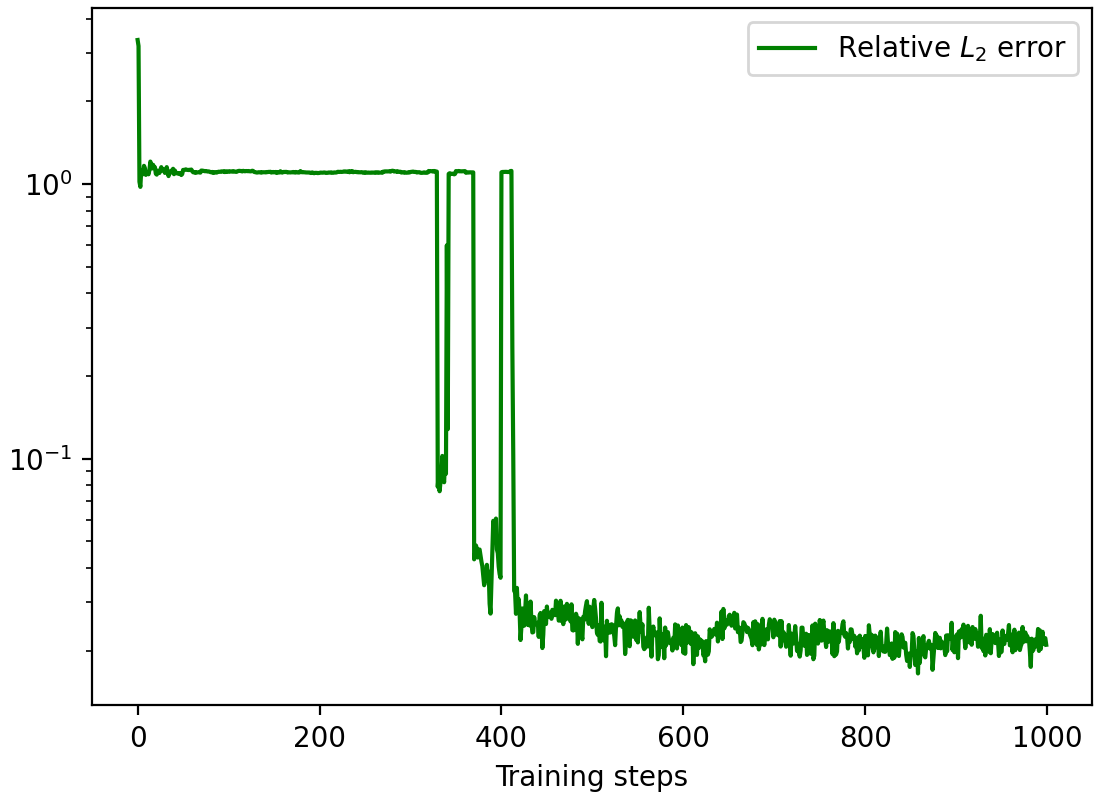}
	\caption{Plot of the relative error for the path $\varphi_{\theta}$ computed from the actor network $\mu_{\theta}(s)$ versus the training steps in Algorithm~\ref{alg2}. The error is defined in Eq.~\eqref{error}.}
	\label{fig1aa}
\end{figure}

\begin{figure}[t!]
	\centering
    \includegraphics[width=\linewidth]{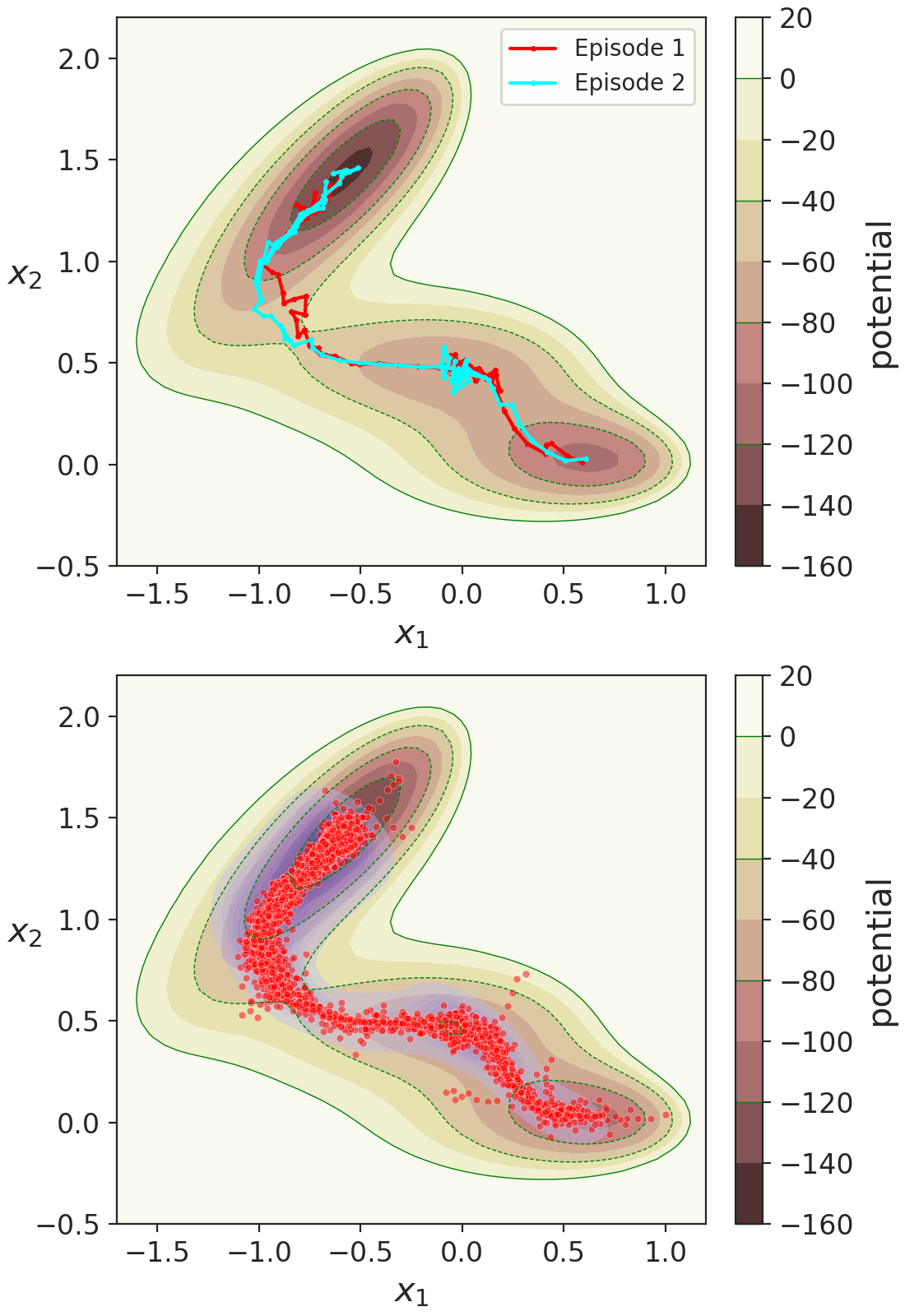}
    \caption{Plot of two generated episodes starting from the state $A$ ({\bf Upper}) and scatter plot of the states $\{s_t^i\}$ in the replay buffer ({\bf Lower}) at the last training step of Algorithm~\ref{alg2}.}
	\label{fig1b}
\end{figure}

{\bf Smooth Mueller Potential.} In the first case, we set the parameter $\omega=0$. For the example, we compute the optimal transition pathway in the path space $\mathbb{C}_{\gamma}$ with $\gamma=0.1$ and apply Algorithm~\ref{alg2} to solve the problem~\eqref{prob2} with $\epsilon=0$. In the algorithm, we conduct $1000$ training steps; at each step, we generate episodes according to the exploration policy~\eqref{action_select} and train the critic and actor networks.


In Fig.~\ref{fig1c}, we plot the transition pathway $\varphi_{\theta}$ computed from the actor network $\mu_{\theta}(s)$, which is projected on the $(x_1,x_2)$ plane. Also shown is the minimum energy path $\varphi$ computed using the string method~\cite{ren2007simplified}. The inset plot in Fig.~\ref{fig1c} shows the potential function along the two paths. In Fig.~\ref{fig1c}, one can observe that the two paths are almost indistinguishable to each other and the method accurately predicts the potential barrier separating the two metastable states via the computed transition pathway. With $\varphi$, the relative error for $\varphi_{\theta}$ as defined in Eq.~\eqref{error} under $10$ independent runs is $e_{\varphi}=0.0203\pm0.0061$. The results demonstrate the accuracy of the proposed reinforcement learning algorithm for predicting the transition pathway of high-dimensional systems. 


In Fig.~\ref{fig1a}, we show plots of the temporal-difference loss function $L_{Q}$ and the average return $J_{\mu}$ over the whole replay buffer versus the training step. 
Also, we show the performance of the actor network, {\it i.e.} the relative error for $\varphi_{\theta}$, versus the training steps in Fig.~\ref{fig1aa}. From the figures, one can observe that the two losses and error are all convergent to low values after about $400$ training steps, which indicates the stability of the algorithm for the system. 
A scatter plot of the states $\{s_t^i\}$ in the replay buffer, projected on the $(x_1,x_2)$ plane, at the last training step of the algorithm is shown in Fig.~\ref{fig1b}, together with two generated episodes starting from the state $A$. 
We observe that the generated data points cluster around the minimum energy path (MEP) as shown in Fig.~\ref{fig1c}, with adequate sampling densities everywhere along the path, and the two episodes are guided 
towards the state $B$ along the MEP with exploration randomness. The results demonstrate that the proposed method is capable of exploring the region regarding transition and sampling transition events efficiently.

\begin{figure}[t!]
\centering
\includegraphics[width=\linewidth]{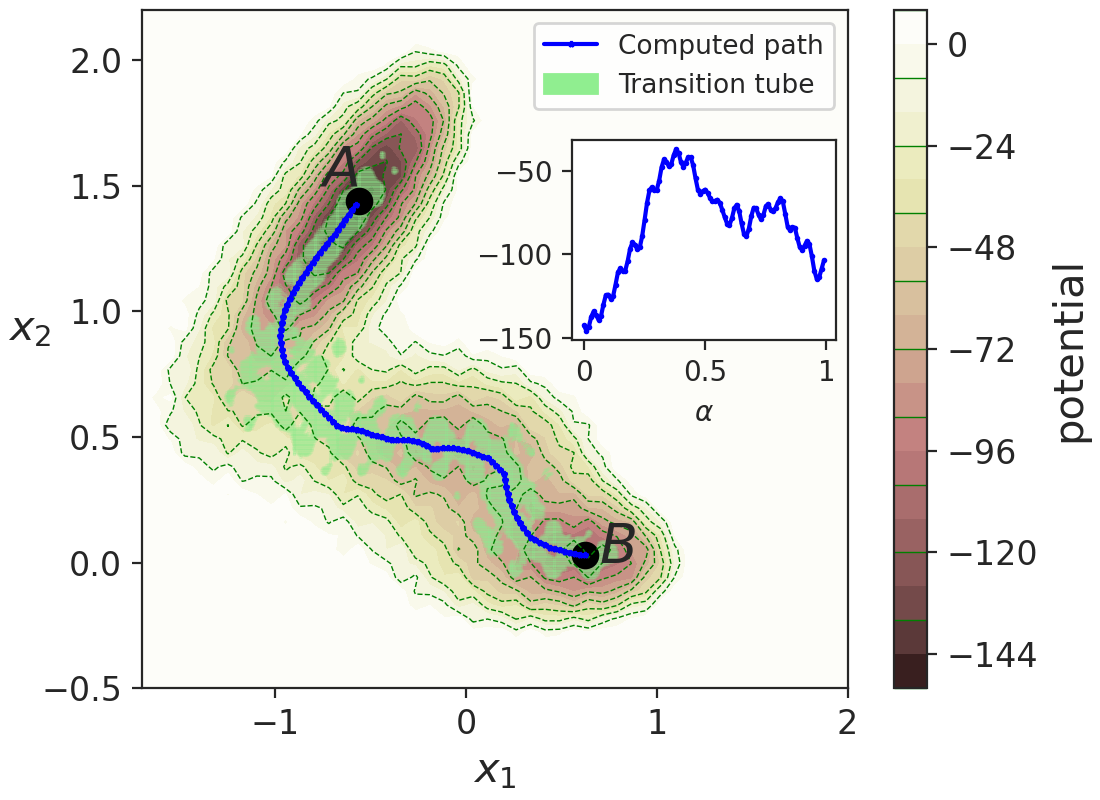}
\caption{Plots of the transition pathway between the metastable states $A$ and $B$ computed from the actor network $\mu_{\theta}(s)$ and the transition tube computed by mapping a committor function landscape of the system, for the rugged potential case at $\epsilon=10$. The inset plot shows the potential function along the computed path.}
\label{fig1d}
\end{figure}

{\bf Rugged Mueller Potential.} In the second case, we set the parameters $\omega=9$, $k=5$ and apply Algorithm~\ref{alg2} to compute the optimal transition pathway at $\epsilon=10$. The parameters in the algorithm are set as those in the previous case. Additionally, we take $h=5\times 10^{-4}$ for generating short trajectories of the system to estimate the function $F_{\epsilon}(z)$ as in Eq.~\eqref{MC}.

Plots of the transition pathway computed from the actor network $\mu_{\theta}(s)$ and the potential function along the path are shown in Fig.~\ref{fig1d}. We validate the solution using a transition tube of the system inside which the transition occurs with high probability~\cite{ren2005transition}. 
Specifically, we compute the transition tube at $\epsilon=10$ by mapping a committor function landscape of the system (See Appendix~\ref{transition_tube}). 
As observed from Fig.~\ref{fig1d}, the computed transition pathway locates near the center of the transition tube, which demonstrates the proposed method is able to predict the transition pathway for high-dimensional systems with rough potential landscapes.

\subsection{Lennard-Jones System}
To illustrate the ability of the algorithm to deal with molecular systems, we apply the method to study a rearrangement process of the Lennard-Jones system, which is a cluster of seven particles on the plane. The system is relatively simple but serves as a good example to benchmark the proposed method.  

In the system, the particles are interacting via the Lennard-Jones potential function
\begin{equation}\label{example2_V}
    V(y_1,\dots,y_7) = \sum_{i<j} 4\epsilon_0 \left[\left(\frac{\sigma_0}{r_{ij}}\right)^{12}-\left(\frac{\sigma_0}{r_{ij}}\right)^{6}\right],
\end{equation}
where $(y_1,\dots,y_7)$ denotes the position-vector of the seven particles, $r_{ij}=|y_i-y_j|$ is the Euclidean distance between particle $i$ and particle $j$ and $\epsilon_0$, $\sigma_0$ specify the energy unit and distance unit in the potential function, respectively. In this example, we take $\epsilon_0=1$ and $\sigma_0=1$. 
In a typical equilibrium state which minimizes the potential~\eqref{example2_V}, the cluster of seven particles forms a hexagon. 
For the system, we are interested in studying the rearrangement process where a particular particle is escaping from the center of the cluster to its surface~\cite{dellago1998efficient,weinan2002string}. Specifically, we look at particle $1$ as the migrating particle during the process. 
Fig.~\ref{fig2a} displays two typical stable configurations of the cluster corresponding to the transition process. We next apply Algorithm~\ref{alg2} to the system for computing a pathway for the transition.

As indicated by the bond-based potential function in Eq.~\eqref{example2_V}, the system is equivalent to any rotation or translation of the cluster, as well as re-indexing of the particles in the system. Based on the properties, we construct a transformation function $\mathcal{T}(x)$ which maps a given configuration $x$ and its equivalent ones to a representative configuration. We incorporate $\mathcal{T}(x)$ into the critic and actor networks to make the learning process reflect the physical properties. The details could be found in Appendix~\ref{tranform}.

In the example, we solve the path-finding problem~\eqref{prob2} over the path space $\mathbb{C}_{\gamma}$ with $\gamma=0.2$ at the temperature $\epsilon=0$. We use the the reinforcement learning algorithm~\ref{alg2} to solve the problem by generating episodes using the policy~\eqref{action_select} and training the critic and actor networks.

The identified transition pathway as computed from the actor network $\mu_{\theta}(s)$ is shown in Fig.~\ref{fig2b}. The results agree well with one transition pathway identified in Ref.~\cite{dellago1998efficient}, which demonstrates that our method is able to predict the transition pathway for the cluster of particles.

\begin{figure}[t!]
\centering
\includegraphics[width=.8\linewidth]{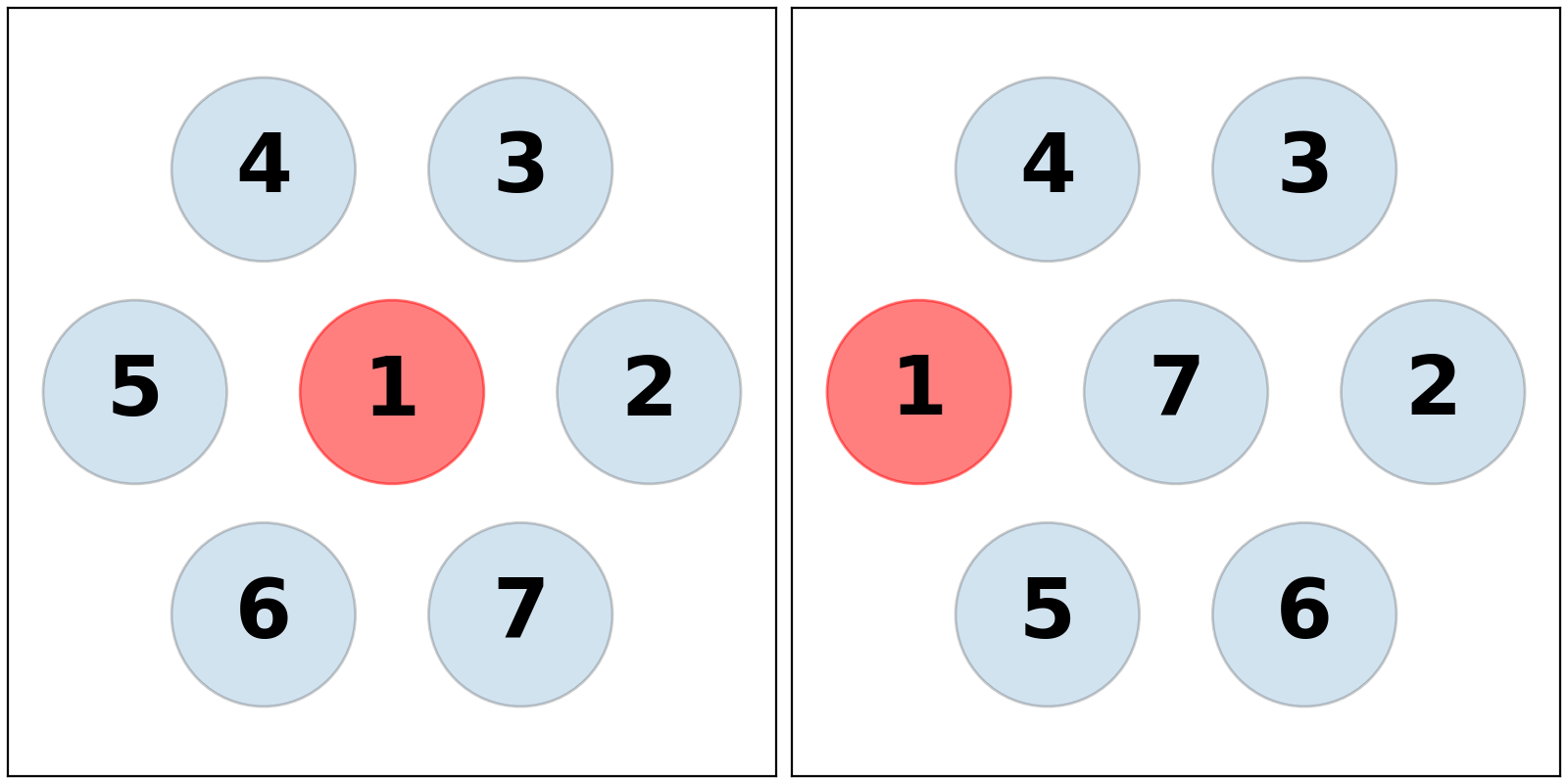}
\caption{Two typical stable configurations of the Lennard-Jones cluster where particle $1$ (in red color) is either at the center ({\bf Left}) or surface ({\bf Right}) of the cluster.}
\label{fig2a}
\end{figure}

\begin{figure}[t!]
\centering
\includegraphics[width=\linewidth]{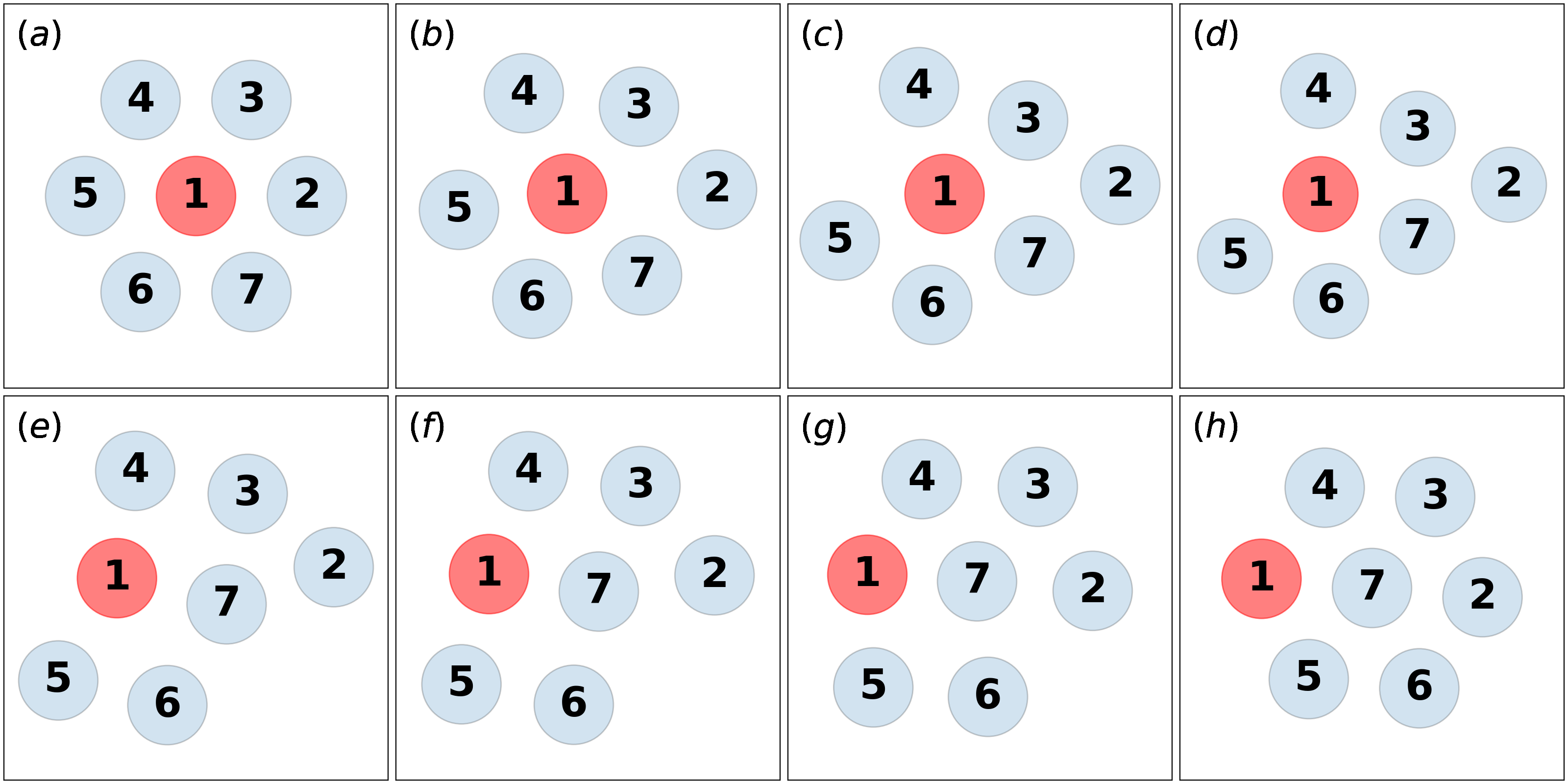}
\caption{Plots of eight states along the computed transition pathway for the Lennard-Jones system (($a$)$\sim$($h$)). During the transition process, particle $1$ is migrating from the center of the cluster to its surface.}
\label{fig2b}
\end{figure}

\section{Conclusions}\label{Conclusion}
In this work, we proposed a deep reinforcement learning algorithm for computing the transition pathway between the metastable states of dynamical systems. It was demonstrated that the proposed method is able to sample the transition events efficiently and thus to predict the globally optimal transition pathway. We illustrated the ability of the method using three model systems.

The proposed method provides a new perspective for investigating the transition mechanism of systems with rough potential energy surfaces. In the future works, we intend to apply the method to more complex systems. 
Another direction for future works is to consider the generalized task of predicting the transition pathway for systems with varying or unseen parameters.


\section*{Acknowledgement}
The work of B. Lin and W. Ren is partially supported by A*STAR under its AME Programmatic programme: Explainable Physics-based AI for Engineering Modelling \& Design (ePAI) [Award No. A20H5b0142]. 

\bibliography{ref}
\bibliographystyle{icml2024}

\newpage
\appendix
\onecolumn

\section{The path space $\mathbb{C}_{\gamma}$.}\label{function_space}

\begin{theorem}
    The set $\bigcup_{\gamma>0}\mathbb{C}_{\gamma}$ is a dense subset of $\bigcup_{T>0}\mathbb{C}_{[0,T]}$. 
\end{theorem}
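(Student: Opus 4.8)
The plan is to prove density by showing that every element of $\bigcup_{T>0}\mathbb{C}_{[0,T]}$ is a uniform limit of equal-chord polygonal paths lying in $\bigcup_{\gamma>0}\mathbb{C}_{\gamma}$, with the crucial feature that the approximants can be kept on the \emph{same} time interval $[0,T]$ as the target. Concretely, I read density in the following sense: $\bigcup_{\gamma>0}\mathbb{C}_{\gamma}\subseteq\bigcup_{T>0}\mathbb{C}_{[0,T]}$ (a path in $\mathbb{C}_{\gamma}$ is piecewise linear, hence absolutely continuous, and connects $A$ to $B$ over its time interval), and for each fixed $\varphi\in\mathbb{C}_{[0,T]}$ and each $\eta>0$ there is a $\psi\in\mathbb{C}_{\gamma}$ defined on $[0,T]$ with $\sup_{t\in[0,T]}|\psi(t)-\varphi(t)|<\eta$. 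The only structural facts I would use about $\varphi$ are that it is continuous (so that the intermediate value theorem applies) and rectifiable with finite length $L=\int_0^T|\varphi'(t)|\,dt$, both of which are immediate for an absolutely continuous path on a compact interval.

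The heart of the argument is a first-hitting inscription of vertices at exact chord distance $\gamma$. I would set $z_0=A$ at $\tau_0=0$ and define inductively
\[
\tau_{i+1}=\inf\{\tau>\tau_i:\ |\varphi(\tau)-z_i|=\gamma\},\qquad z_{i+1}=\varphi(\tau_{i+1}),
\]
continuing as long as this set is nonempty. Since $\tau\mapsto|\varphi(\tau)-z_i|$ is continuous and vanishes at $\tau_i$, the intermediate value theorem makes each $z_{i+1}$ well defined with $|z_{i+1}-z_i|=\gamma$ exactly, and it guarantees that before the hitting time the curve cannot have exceeded distance $\gamma$, so the whole arc over $[\tau_i,\tau_{i+1}]$ stays within distance $\gamma$ of $z_i$. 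Because the chord length never exceeds the arc length, each completed step advances arc length by at least $\gamma$, so the process terminates after at most $N\le L/\gamma+1$ steps. When it stops, $|\varphi(\tau)-z_i|<\gamma$ for all remaining $\tau$, in particular $|B-z_i|<\gamma$; I append the final vertex $z_N=B$ at $\tau_N=T$. The vertices then satisfy the constraints~\eqref{constraint} precisely, and choosing the time points $t_i=\tau_i$ produces a path $\psi=\psi_\gamma\in\mathbb{C}_{\gamma}$ on $[0,T]$.

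The error estimate is then immediate and avoids any modulus-of-continuity bookkeeping. On each subinterval $[\tau_i,\tau_{i+1}]$ the target satisfies $|\varphi(t)-z_i|\le\gamma$ by construction, while $\psi(t)$ lies on the segment from $z_i$ to $z_{i+1}$, whose endpoints are within distance $\gamma$ of $z_i$, so by convexity $|\psi(t)-z_i|\le\gamma$ as well. The triangle inequality gives $|\psi_\gamma(t)-\varphi(t)|\le 2\gamma$ for all $t\in[0,T]$, whence $\sup_{t}|\psi_\gamma(t)-\varphi(t)|\le 2\gamma\to0$ as $\gamma\to0$, proving density.

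The main obstacle is the exactness of the equal-chord constraint $|z_i-z_{i+1}|=\gamma$: since $\tau\mapsto|\varphi(\tau)-z_i|$ need not be monotone, the vertices must be placed by the first-hitting rule rather than by naive arc-length spacing, and one must verify that this rule marches strictly forward and terminates in finitely many steps, which is exactly what the arc-length lower bound $\ell_i\ge\gamma$ on each advance supplies. A secondary subtlety is that the two unions range over different time horizons; I sidestep it by keeping $\psi_\gamma$ on the same $[0,T]$ as $\varphi$, so that plain sup-norm comparison suffices. I would finally remark that if one wanted density strong enough that the Freidlin--Wentzell action~\eqref{FW_action} also converged, the estimate above would have to be upgraded to an $L^2$ approximation of $\varphi'$ by the chord directions, requiring $\varphi\in W^{1,2}$; this is a refinement I would flag rather than pursue, since the stated claim concerns the underlying path spaces.
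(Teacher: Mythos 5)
Your proof is correct, and it takes a genuinely different route from the paper's. The paper argues in two stages: it first uses uniform continuity to build a polygonal interpolant $\psi_0$ with vertices $\psi_0(t_i)=\varphi(t_i)$ on a fine time partition (error $2\epsilon$), and then repairs the unequal segment lengths by writing each length as $\gamma_i=k_i\gamma+2l_i$ with $\gamma$ the \emph{minimum} segment length and pushing the middle of each segment out by $\sqrt{\gamma^2-l_i^2}$ along a unit normal $\beta$, so that every segment of the resulting $\psi$ has length exactly $\gamma$ (additional error at most $\gamma<\epsilon$, total $3\epsilon$). You instead inscribe the equal-chord polygon directly into $\varphi$ by a first-hitting rule, which enforces $|z_{i+1}-z_i|=\gamma$ exactly by the intermediate value theorem, terminates because each step consumes arc length at least $\gamma$ out of a finite total $L$, and yields the clean bound $\sup_t|\psi_\gamma-\varphi|\le 2\gamma$ since both $\varphi$ and $\psi_\gamma$ stay within distance $\gamma$ of $z_i$ on $[\tau_i,\tau_{i+1}]$. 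Your argument buys several things: it is a single-stage construction with a quantitative rate ($2\gamma$ for \emph{any} prescribed $\gamma$, whereas the paper only produces \emph{some} $\gamma$ determined a posteriori by the partition); it needs only continuity plus finite length rather than modulus-of-continuity bookkeeping; and it works in every dimension, whereas the paper's normal vector $\beta$ does not exist when $d=1$. The only points worth making explicit are that the infimum in the hitting definition is attained and exceeds $\tau_i$ (by continuity, since $|\varphi(\tau_i)-z_i|=0<\gamma$), that absolute continuity on a compact interval gives $L<\infty$, and the degenerate case where the final hitting lands exactly at $B$ (then $|z_{N-1}-z_N|=0\leqslant\gamma$, still admissible under the constraints~\eqref{constraint}); none of these affects the validity of the argument. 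Your closing remark is also apt: sup-norm density is what the theorem asserts, and convergence of the action~\eqref{FW_action} would indeed require a separate $L^2$ estimate on the derivatives, which neither your proof nor the paper's addresses.
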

\begin{proof}
For a path $\varphi(t)$ in $\bigcup_{\gamma>0}\mathbb{C}_{\gamma}$ which is continuous and represented by a finite number of line segments, one can easily see that the path is absolute continuous. Thus,  $\bigcup_{\gamma>0}\mathbb{C}_{\gamma}$ is a subset of $ \bigcup_{T>0}\mathbb{C}_{[0,T]}$.

Suppose $\varphi(t) \in \mathbb{C}_{[0, T]}$ for a number $T>0$. Given $\epsilon > 0$, we shall prove that there exists $\gamma > 0$ and a $\psi(t) \in \mathbb{C}_{\gamma}$ such that
\begin{equation}
	\max_{t\in [0,T]}| \psi(t) - \varphi(t) | < 3\epsilon.
\end{equation}
In the following, when we consider a polygonal curve, we assume the time derivative of the curve on each line segment is constant.

Since $\varphi$ is absolute continuous, it is uniformly continuous. Thus, there exists $\delta > 0$ such that 
\begin{equation}\label{uniform_cont}
	|\varphi(t_1) - \varphi(t_2)| < \epsilon,\quad \forall\ |t_1 - t_2| < \delta.
\end{equation}
Let $0 = t_0 < t_1 < \cdots < t_n = T $ such that $|t_{i} - t_{i - 1}| < \delta$ and $\varphi(t_{i-1})\neq\varphi(t_i)$ for $1 \leqslant i \leqslant n$. Then $|\varphi(t_i) - \varphi(t_{i - 1})| < \epsilon$ for $1 \leqslant i \leqslant n$. 

We first construct a polygonal curve $\psi_0(t),\ 0\leqslant t\leqslant T$ such that
\begin{equation}\label{Eq11}
	\max_{t\in [0,T]}|\psi_0(t) - \varphi(t)| < 2\epsilon.
\end{equation}
Let $\psi_0(t)$ be a polygonal curve with vertices at the times $\{t_i\}_{0\leqslant i \leqslant n}$ and
\begin{equation}\label{def_psi_0}
	\psi_0(t_{i}) = \varphi(t_i),\quad 0\leqslant i \leqslant n.
\end{equation}

Then for arbitrary time interval $[t_{i - 1}, t_{i}]$ and $t\in[t_{i - 1}, t_{i}]$, we have
\begin{equation*}
\begin{aligned}
    |\psi_0(t) - \varphi(t)| & \leqslant |\psi_0(t) - \psi_0(t_{i - 1})| + |\psi_0(t_{i - 1}) - \varphi(t_{i - 1})| + |\varphi(t_{i - 1}) - \varphi(t)|\\
	&=  |\psi_0(t) - \psi_0(t_{i - 1})| + |\varphi(t_{i - 1}) - \varphi(t)| \\
	& \leqslant |\psi_0(t_i) - \psi_0(t_{i - 1})| + |\varphi(t_{i - 1}) - \varphi(t)| \\
	& = |\varphi(t_i) - \varphi(t_{i - 1})| + |\varphi(t_{i - 1}) - \varphi(t)|\\
	& < 2\epsilon.
\end{aligned}
\end{equation*}
The last inequality is due to the uniform continuity~\eqref{uniform_cont} of $\varphi$. Hence, we have proved the inequality~\eqref{Eq11}.

Next, we construct a polygonal curve $\psi(t),\ 0\leqslant t\leqslant T$ with line segments of uniform length such that 
\begin{equation}\label{Eq12}
	\max_{t\in [0,T]}| \psi(t)-\psi_0(t) | < \epsilon.
\end{equation}
Denote the length of line segments in $\psi_0$ by
\begin{equation*}
	\gamma_i = |\psi_0(t_{i}) - \psi_0(t_{i - 1})|,\quad 1\leqslant i \leqslant n
\end{equation*}
and the minimum length by
\begin{equation}\label{min_len}
	\gamma = \min_{1\leqslant i \leqslant n}|\psi_0(t_{i}) - \psi_0(t_{i - 1})|.
\end{equation}
Since $\gamma_i = |\varphi(t_{i}) - \varphi(t_{i - 1})|>0$, we have $\gamma>0$. By the definition~\eqref{def_psi_0} of $\psi_0$ and the uniform continuity~\eqref{uniform_cont} of $\varphi$, we have $\gamma < \epsilon$. We write the segment length in $\psi_0$ as a sum of a multiple of the minimum length and a residual value, {\it i.e.}
\begin{equation*}
	\gamma_i = k_i \gamma + 2l_i,\quad \text{where}\ k_i \in \mathbb{N}^*\ \text{and}\ 0 \leqslant 2 l_i < \gamma.
\end{equation*}

To construct the polygonal curve $\psi$, we first let $\psi(t_i) = \psi_0(t_i)$ for $0\leqslant i \leqslant n$. Then we shall carefully define $\psi$ on each interval $[t_{i - 1}, t_i]$ for $1\leqslant i \leqslant n$. 

We first consider the case where $l_i>0$. Denote the time slice $\Delta t = t_i - t_{i - 1}$. Let $\beta$ be a unit vector that is normal to the $i$th line segment of $\psi_0$, {\it i.e.}
\begin{equation*}
	\langle \beta, \psi_0(t_{i - 1}) - \psi_0(t_{i})\rangle = 0.
\end{equation*}
On the interval $[t_{i - 1}, t_i]$, we let the polygon curve $\psi$ to have vertices $\{ \psi(t_{i - 1}), \psi(t_{i -1} + \dfrac{l_i}{\gamma_i}  \Delta t), \psi(t_{i} - \dfrac{l_i}{\gamma_i}  \Delta t), \psi(t_i)\}$ where
\begin{equation*}
\begin{aligned}
    \psi(t_{i -1} + \frac{l_i}{\gamma_i}  \Delta t) &= \psi_0(t_{i -1} + \frac{l_i}{\gamma_i}  \Delta t) + \sqrt{\gamma^2 - l_i^2}\beta, \\ 
	\psi(t_{i } - \frac{l_i}{\gamma_i}  \Delta t) &= \psi_0(t_{i} - \frac{l_i}{\gamma_i}  \Delta t) + \sqrt{\gamma^2 - l_i^2}\beta.
\end{aligned}
\end{equation*}
The specific form of $\psi$ on $[t_{i-1},t_i]$ is defined as
\begin{equation}\label{def_psi}
\psi(t_{i - 1} + t) =
\begin{cases}
	 \psi_0(t_{i - 1} + t) + \dfrac{\gamma_i t}{l_i\Delta t} \sqrt{\gamma^2 - l_i^2}\beta, &t \in [0, \dfrac{l_i}{\gamma_i}  \Delta t]; \\
	 \psi_0(t_{i - 1} + t) + \sqrt{\gamma^2 - l_i^2}\beta, &t \in [\dfrac{l_i}{\gamma_i}  \Delta t, \Delta t - \dfrac{l_i}{\gamma_i}  \Delta t];\\
	 \psi_0(t_{i - 1} + t) + \dfrac{\gamma_i (\Delta t - t)}{l_i\Delta t} \sqrt{\gamma^2 - l_i^2}\beta, &t \in [\Delta t - \dfrac{l_i}{\gamma_i}  \Delta t, \Delta t].
\end{cases}
\end{equation}
One can verify that the following distance is equal to the minimum length $\gamma$ as defined in Eq.~\eqref{min_len},
\begin{equation*}
\begin{aligned}
   &\quad\left|\psi(t_{i - 1} + \frac{l_i}{\gamma_i} \Delta t ) - \psi(t_{i - 1})\right|\\
	&=\left|\psi_0(t_{i - 1} + \frac{l_i}{\gamma_i} \Delta t ) - \psi_0(t_{i - 1}) + \sqrt{\gamma^2 - l_i^2}\beta\right|\\
	&=\left( |\psi_0(t_{i - 1} + \frac{l_i}{\gamma_i} \Delta t ) - \psi_0(t_{i - 1})|^2 + (\gamma^2 - l_i^2)  \right)^{\frac{1}{2}} \\
	&= \left( \left(\frac{l_i}{\gamma_i}\Delta t  \frac{|\psi_0(t_i) - \psi_0(t_{i  - 1})|}{\Delta t} \right)^2 + (\gamma^2 - l_i^2)  \right)^{\frac{1}{2}}\\
	&= \gamma. 
\end{aligned}
\end{equation*}
Similarly, we have 
\begin{equation*}
	\left|\psi(t_{i} - \frac{l_i}{\gamma_i} \Delta t ) - \psi(t_{i})\right| = \gamma.
\end{equation*}
Also, one can show that the distance between $\psi(t_{i -1} + \dfrac{l_i}{\gamma_i}  \Delta t)$ and $\psi(t_{i} - \dfrac{l_i}{\gamma_i}  \Delta t)$ is a multiple of $\gamma$,
\begin{equation*}
\begin{aligned}
	&\quad\left| \psi(t_{i} - \frac{l_i}{\gamma_i} \Delta t ) - \psi(t_{i - 1} + \frac{l_i}{\gamma_i} \Delta t) \right| \\
	&=\left| \psi_0(t_{i} - \frac{l_i}{\gamma_i} \Delta t ) - \psi_0(t_{i - 1} + \frac{l_i}{\gamma_i} \Delta t) \right|\\
	&= \frac{t_i - t_{i - 1} - \frac{2l_i}{\gamma_i}\Delta t}{\Delta t} |\psi_0(t_{i}) - \psi_0(t_{i - 1})|\\
	&= \frac{\Delta t - \frac{2l_i}{\gamma_i}\Delta t}{\Delta t} \gamma_i \\
	&= k_i \gamma.
\end{aligned}
\end{equation*}
In sum, the curve $\psi$ over $[t_{t-1},t_i]$ as defined in Eq.~\eqref{def_psi} is composed of three line segments with the lengths $\gamma$, $k_i\gamma$ and $\gamma$, respectively. One can treat $\psi$ over $[t_{t-1},t_i]$ as a polygon curve of $k_i+2$ line segments with uniform length $\gamma$. Thus, we see that the constructed curve $\psi\in \mathbb{C}_{\gamma}$.

For another case where $l_i=0$, we simply set $\psi(t)=\psi_0(t)$, $t\in[t_{i-1},t_i]$. Then $\psi$ over $[t_{i-1},t_i]$ can be regarded as a polygon curve of $k_i$ line segments with uniform length $\gamma$. Also in this case, the constructed curve $\psi\in \mathbb{C}_{\gamma}$.

Moreover, in the latter case $l_i=0$, we have $\max_{t \in [t_{i - 1}, t_i]} |\psi(t) - \psi_0(t)|=0$. In the case $l_i>0$, we have
\begin{equation*}
	\max_{t \in [t_{i - 1}, t_i]} |\psi(t) - \psi_0(t)| = \left|\sqrt{\gamma^2 - l_i^2} \beta\right| \leqslant \gamma < \epsilon.
\end{equation*}

This proves the inequality~\eqref{Eq12}. Therefore, combing the inequality~\eqref{Eq11} and~\eqref{Eq12}, we find a path $\psi\in\mathbb{C}_{\gamma}$ such that
\begin{equation*}
\begin{aligned}
	\max_{t\in [0,T]} | \psi(t) - \varphi(t) |& \leqslant \max_{t\in [0,T]} | \psi(t) - \psi_0(t) | + \max_{t\in [0,T]} | \psi_0(t) - \varphi(t) |\\
	& < \epsilon + 2 \epsilon \\
	& = 3 \epsilon.
\end{aligned}
\end{equation*}
\end{proof}

\section{Numerical quadratures for approximating the integral in problem~\eqref{min_Sr}.}\label{quadrature}

One can use a numerical quadrature with $m$ grid points to approximate the integral as in problem~\eqref{min_Sr}, where $m$ is a positive integer. We discretize the time interval $[0,h_i]$ using $m$ uniform points: $t_j=(j-1/2)/m\cdot h_i$, $1\leqslant j\leqslant m$. Denote the position of the path $\varphi^i(t)$ at time $t_j$ by \begin{equation*}
    z_i^j=\varphi^i(t_j)=z_i+\frac{j-1/2}{m}(z_{i+1}-z_i),\quad 1\leqslant j\leqslant m.
\end{equation*} 
Then the integral in the problem~\eqref{min_Sr} can be approximated by
\begin{align}
     \tilde{L}(z_i,z_{i+1};h_i) &\approx 
    \frac{h_i}{m}\sum_{j=1}^m \left| \frac{z_{i+1}-z_i}{h_i}+\nabla V(z_i^j)\right|^2\nonumber\\
    &\geqslant 2 |z_{i+1}-z_i|\cdot
    \sqrt{\frac{1}{m}\sum_{j=1}^m |\nabla V(z_i^j)|^2}
    +2 \langle z_{i+1}-z_i , \frac{1}{m}\sum_{j=1}^m \nabla V(z_i^j)\rangle.\label{cost_general1}
\end{align}
where the minimum value is achieved by setting
\begin{equation*}
    h_i^* = |z_{i+1}-z_i|\left/
    \sqrt{\frac{1}{m}\sum_{j=1}^m |\nabla V(z_i^j)|^2}\right..
\end{equation*}
By setting $m=1$, the cost function~\eqref{cost_general1} reduces to the one in Eq.~\eqref{cost1} using the mid-point quadrature as in the paper.

\newpage
\section{The parameters in Algorithm~\ref{alg2} used for the numerical examples in Section~\ref{Examples}.}\label{Parameters}

\begin{table}[!h]
	\caption{The parameters in Algorithm~\ref{alg2} used for the 2D system, extended Mueller system and Lennard-Jones system of seven particles.}
    \vspace{5mm}
	\label{tab1a}
	\begin{center}
		\begin{tabular}{ ccccc }
			\Xhline{2\arrayrulewidth} \vspace{-0.25cm}\\
			& Parameters  & 2D system & Mueller system & Lennard-Jones system 
            \vspace{.1cm}\\
			\Xhline{2\arrayrulewidth} \vspace{-0.25cm}\\

            \multirow{4}{*}{{\rotatebox[origin=c]{90}{Critic $\tilde{Q}_{\theta}$}}} 
            & Network structure & $4$-$50$-$50$-$1$ & $20$-$50$-$50$-$1$  & $24$-$100$-$100$-$1$ \\
            & Activation on hidden layers & $\tanh$ & $\tanh$  & $\tanh$ \\
            & Activation on output layer  & sigmoid & sigmoid  & sigmoid \\
            & $\lambda$  & $1000$ & $1000$  & $1000$
			\vspace{.1cm}\\
            \hline \vspace{-0.25cm}\\

            \multirow{3}{*}{{\rotatebox[origin=c]{90}{Actor $\tilde{\mu}_{\theta}$}}} 
            & Network structure & $2$-$50$-$50$-$2$ & $10$-$50$-$50$-$10$  & $12$-$100$-$100$-$12$ \\
            & Activation on hidden layers & $\tanh$ & $\tanh$  & $\tanh$ \\
            & Activation on output layer  & None & None  & None
			\vspace{.1cm}\\
            \hline \vspace{-0.25cm}\\

            \multirow{3}{*}{{\rotatebox[origin=c]{90}{Problem}}} 
            & $\gamma$ in the path space $\mathbb{C}_{\gamma}$ & $0.1$ & $0.1$  & $0.2$ \\
            & Numerical quadrature in cost $R_{\epsilon}$ & mid-point scheme & mid-point scheme & mid-point scheme \\
            & $(M,h)$ in estimating $F_{\epsilon}$  & - & $(1000,5\times 10^{-4})$  & - 
			\vspace{.1cm}\\
            \hline \vspace{-0.25cm}\\
            
			\multirow{8}{*}{{\rotatebox[origin=c]{90}{Algorithm}}} 
            & $maxstep$ (\# of training steps) & $700$ & $1000$  & $1000$ \\
            & $maxtime$ (maximum length of one episode) & $50$ & $100$  & $100$ \\
            & \# of episodes per training step & $50$ & $50$  & $100$ \\
            & Temperature $\epsilon'$ for sampling initial states & $0.3$ & $20$  & $0.3$ 
            \\
            & $(p_1,p_2)$ in exploration policy & $(1/3,1/3)$ & $(1/3,1/3)$ & $(1/3,1/3)$ 
            \\
            & Distribution for $\xi_t$ in exploration policy & $\mathcal{N}(0,\pi/4)$ & $\mathcal{N}(0,\pi/4)$ & $\mathcal{N}(0,\pi/4)$
			\\
            & $step_0$ for target networks  & $10$ & $10$ & $10$ 
			\\
			& buffer size $N_R$ & $10^5$ & $10^5$ & $10^6$ 
			\vspace{.1cm}\\
            \hline \vspace{-0.25cm}\\
            
            \multirow{3}{*}{{\rotatebox[origin=c]{90}{Training}}} 
            & Optimizer  & Adam & Adam  & Adam
			\\
			& Learning rate & $10^{-3}$ & $10^{-3}$ & $10^{-3}$
            \\
            & Batch size $|\mathcal{B}|$ & $5000$ & $5000$ & $5000$
            \vspace{.1cm}\\
			\Xhline{2\arrayrulewidth}
		\end{tabular}
	\end{center}
\end{table}

\section{Computing the transition tube for the Mueller system.}\label{transition_tube}
For the Mueller system with potential function~\eqref{Example1_V}, we define the two metastable set $S_A$, $S_B$ with radius $r_0$ around the states $A$, $B$ as 
\begin{equation*}
    S_A = \{x\in\R^{10}:|(x_1,x_2)-(A_1,A_2)|<r_0\},\quad
    S_B = \{x\in\R^{10}:|(x_1,x_2)-(B_1,B_2)|<r_0\}
\end{equation*}
with $r_0=0.1$, where $(A_1,A_2)$ and $(B_1,B_2)$ denote the first two numbers in the coordinates of $A$ and $B$, respectively. The first hitting time of the two sets $S_A$ and $S_B$ is defined as
\begin{equation*}
    \tau_A(z) = \inf_{t>0} \{x_t\in S_A: x_0=z\},\quad
    \tau_B(z) = \inf_{t>0} \{x_t\in S_B: x_0=z\}.
\end{equation*}
The committor function $q(x)$ is defined in the configuration space, which is the probability that the system starting from $x$ first arrives in $S_B$ rather than $S_A$:
\begin{equation*}
    q(x) = \text{Prob}[\tau_A(x)>\tau_B(x)].
\end{equation*}
A mathematical description for the committor function is given by the backward Kolmogorov equation with the Dirichlet boundary conditions:
\begin{equation}\label{PDE}
    \begin{cases}
    \nabla V(x)\cdot \nabla q(x) - \epsilon\Delta q(x) = 0,\quad x\in\Omega\setminus(S_A\cup S_B)\\
    q(x)=0,\ x\in \partial S_A;\quad q(x)=1,\ x\in\partial S_B.
\end{cases}
\end{equation}
We compute the committor function $q_m(x_1,x_2)$ for the $2$D Mueller system with the potential $V_m(x_1,x_2)$ at $\epsilon=10$ by solving the partial differential equation~\eqref{PDE} using the finite element method. The computational domain is taken as $\Omega=[-1.5,1]\times[-0.5,2]$. Then the committor function $q(x)$ for the $10$D Mueller system with the potential~\eqref{Example1_V} is given by $q(x_1,\dots,x_{10})=q_m(x_1,x_2)$. 

The committor function itself is a good reactive coordinate for describing the transition of the system. As illustrated in Ref.~\cite{ren2005transition}, the transition tube can be characterized by the iso-committor surfaces of the committor function. For the $10$D Mueller system considered in the numerical example, the transition events can be characterized by the transition tube corresponding to the first two dimensions $(x_1,x_2)$.

Next, we compute the transition tube for the two-dimensional potential $V_m(x_1,x_2)$ at $\epsilon=10$. 
To approximate the iso-committor surfaces, we divide the configuration space into sub-regions using a transition pathway ({\it e.g.} the minimum energy path corresponding to the potential $V_m(x_1,x_2)$ with $\omega=0$). Specifically, the transition pathway $\varphi$ is represented by $N_p=185$ states $z_1,\dots,z_{N_p}\in \R^2$ with equal distance. Denote the committor function $q_m(x_1,x_2)$ on these points by $q_i=q_m(z_i)$, $1\leq i\leq N_p$. Note that $q_1<\cdots<q_{N_p}$. We approximate the $q_i$-isocommittor surface using the region 
\begin{equation*}
\Omega_i=\{(x_1,x_2)\in\R^2:(q_{i-1}+q_{i})/2<q_m(x_1,x_2)<(q_i+q_{i+1})/2\}, \quad 1\leq i\leq N_p
\end{equation*}
where $q_0=-q_1$ and $q_{N_p+1}=2-q_{N_p}$.

\begin{figure}[t!]
\centering
\includegraphics[width=\linewidth]{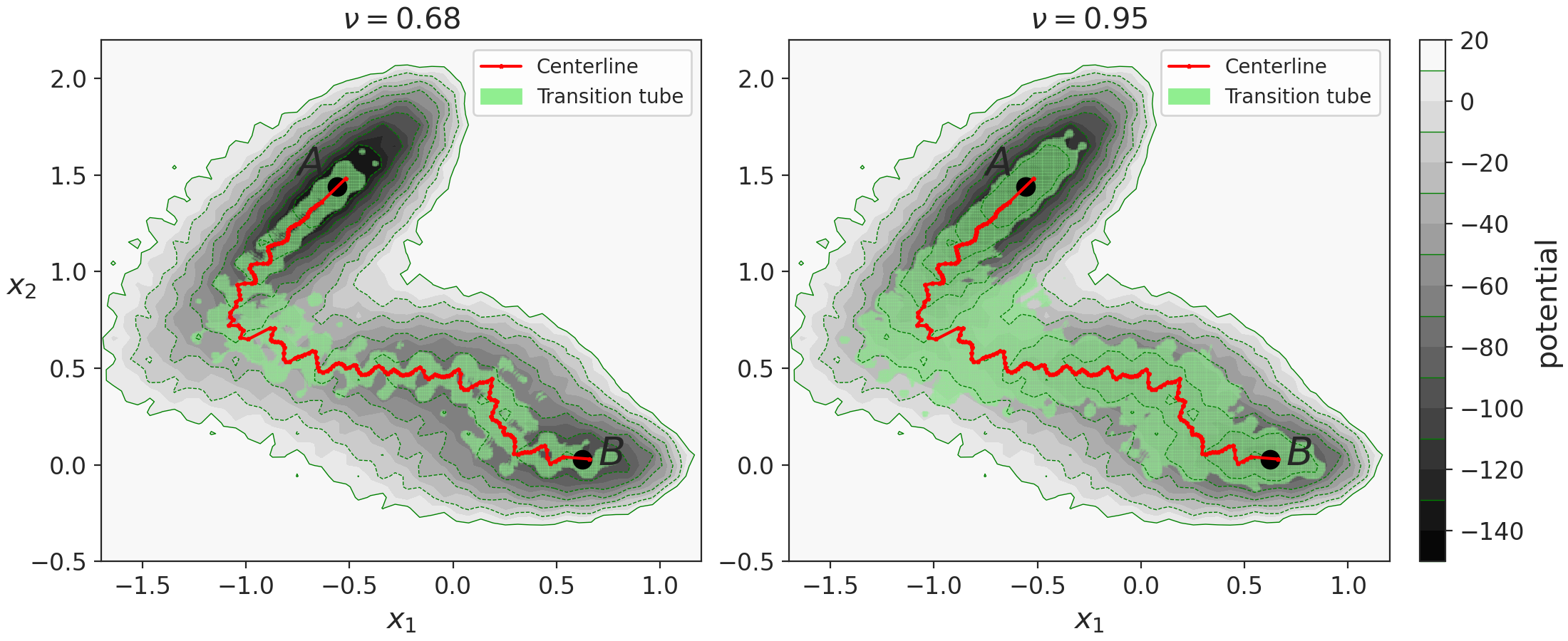}
\caption{Plots of the computed transition tube and its centerline for the potential function $V_m(x_1,x_2)$ at $\epsilon=10$ with different confidence levels, $\nu=0.68$ ({\bf Left}) and $\nu=0.95$ ({\bf Right}). The contour lines indicate the potential function.}
\label{figA1}
\end{figure}

We discretize the computational domain $\Omega$ using a mesh with $500\times 500$ grid points. The set of grid points is denoted by $\{X_k\}_{k\in \mathcal{I}}$. Denote by $\mathcal{I}_i$ the indices of the grid points in $\{X_k\}_{k\in \mathcal{I}}$ which locate inside the region $\Omega_i$,
\begin{equation*}
    \mathcal{I}_i = \{k\in\mathcal{I}:X_k\in\Omega_i\}\quad 1\leq i\leq N_p.
\end{equation*}
We assign the following Gibbs-Boltzmann weight to each grid point in $\{X_k\}_{k\in\mathcal{I}_i}$ in the region $\Omega_i$,
\begin{equation*}
    w_k = \frac{1}{Z_i}\exp\left[-\frac{V_m(X_k)}{\epsilon}\right],\quad k\in\mathcal{I}_i
\end{equation*}
where the normalization constant $Z_i=\sum_{k\in\mathcal{I}_i} \exp[-V_m(X_k)/\epsilon]$. The centerline of the transition tube can be represented by the weighted average of the grid points on each region $\Omega_i$:
\begin{equation*}
    c_i = \sum_{k\in\mathcal{I}_i} w_k \cdot X_k,\quad 1\leq i\leq N_p.
\end{equation*}
To represent the transition tube, we sort the weights and choose a subset $\{X_k\}_{k\in\mathcal{J}_i}$ of $\{X_k\}_{k\in\mathcal{I}_i}$ containing the least number of gird points with the largest weights $w_k$ such that
\begin{equation*}
    \sum_{k\in \mathcal{J}_i} w_k \geq\nu,
\end{equation*}
where $\nu\in[0,1]$ denotes the confidence level. Then the transition tube is represented by a collection of grid point sets, $\{X_k\}_{k\in\mathcal{J}_i}$, $1\leq i\leq N_p$.

In Fig.~\ref{figA1}, we show the plots of the computed transition tube and its centerline using two different values for $\nu$ ($\nu=0.68$ and $\nu=0.95$). In the numerical example, we take $\nu=0.68$, as shown in upper panel of Fig.~\ref{fig1d}.

\section{Construction of the transformation function $\mathcal{T}(x)$ for the Lennard-Jones system.}\label{tranform}

Since the Lennard-Jones system is invariant to translation of the particles in the system, we fix the coordinate of particle $1$ at the origin, {\it i.e.} $y_1=(0,0)$ in the example. Thus the system is defined in the $12$-dimensional space with the configuration $x=(y_2,\dots,y_7)$, where $y_i$ denotes the coordinate of particle $i$.

Before introducing the transformation function, we define two angle functions as follows.

{\bf Definition.} {\it The angle $f(u,v)\in [0,\pi]$ between two nonzero vectors $u=(u_1,u_2)$ and $v=(v_1,v_2)$ in $\R^2$ is defined as 
\begin{equation*}
    f(u,v)=\arccos\left(\dfrac{\langle u,v\rangle}{|u|\cdot|v|}\right).
\end{equation*}
Then we define the oriented angle between $u$ and $v$ as
\begin{equation*}
    \eta(u,v) = 
\begin{cases}
    f(u,v),& \langle u',v\rangle\geq 0\\
    2\pi-f(u,v),& \langle u',v\rangle<0\\
\end{cases}
\end{equation*}
where the vector $u'=(-u_2,u_1)$. Note that $\eta(u,v)\in[0,2\pi]$.}

Next, we construct a transformation function $\mathcal{T}(x)$ for the system in a two-step procedure.

\begin{figure}
\centering
\includegraphics[width=.8\linewidth]{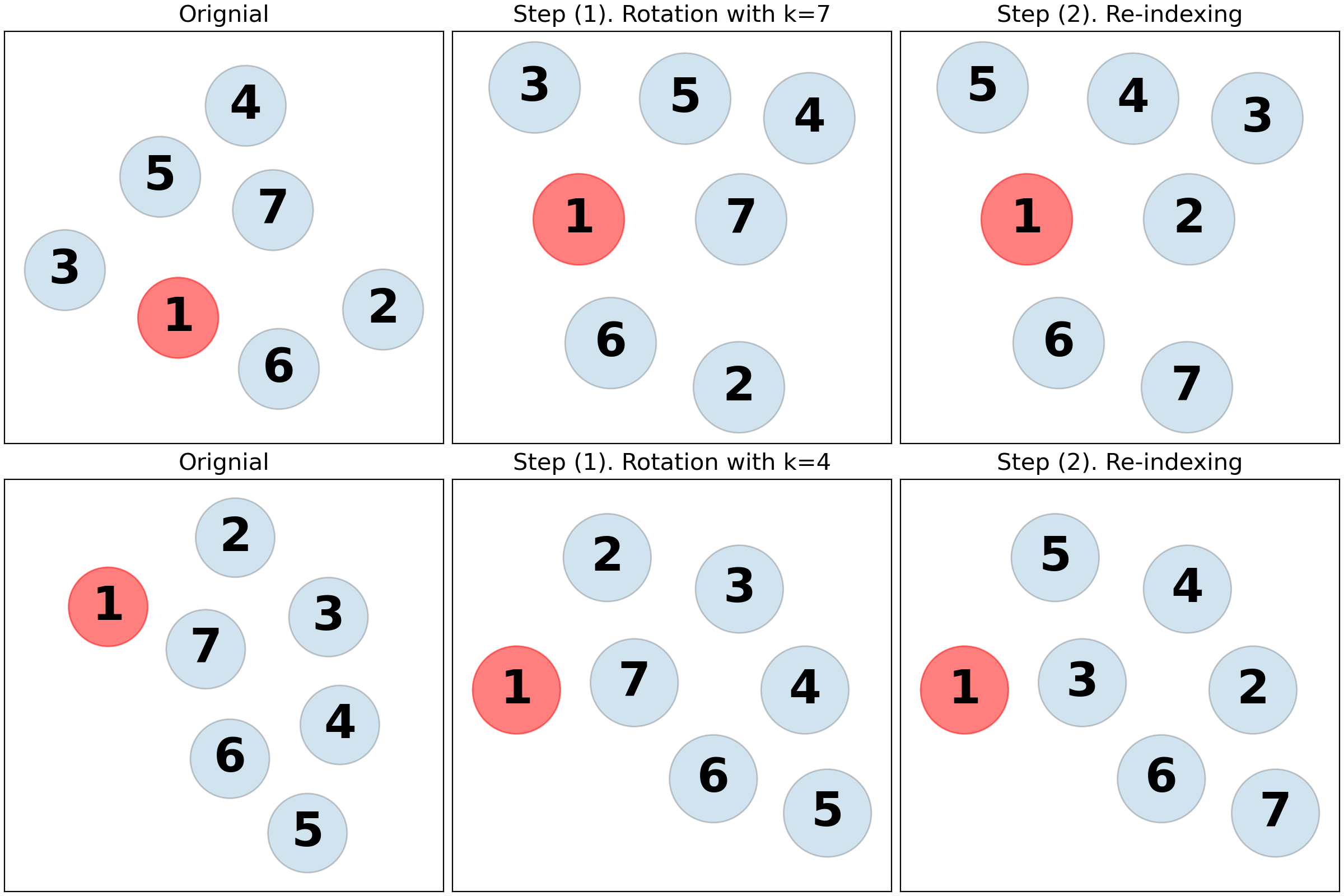}
\caption{Illustration of the  two-step procedure for the transformation function $\mathcal{T}(x)$ using two example clusters of the Lennard-Jones system.}
\label{figA2}
\end{figure}

\subsection*{(1) Rotating the cluster}
We compute the mean vector
\begin{equation*}
    \bar{y} = \frac{1}{6}(y_2+\dots+y_7).
\end{equation*}
Let $k$ be the number that solves
\begin{equation*}
    k = \argmin_{2\leq i\leq 7} f(y_i,\bar{y}).
\end{equation*}

Using the particle $k$, we set the angle 
\begin{equation*}
    \beta = \eta(y_k,e_x),
\end{equation*}
where $e_x$ denotes the unit vector $e_x=(1,0)$, and construct the rotation matrix 
\begin{equation*}
    D_{\beta} = \left[
\begin{matrix}
    \cos\beta & -\sin\beta\\
    \sin\beta & \cos\beta\\
\end{matrix}\right].
\end{equation*}
We rotate the cluster with configuration $x$ to $x'=(y'_2,\dots,y'_7)$,
\begin{equation*}
    y'_i=D_{\beta} \left[
    \begin{matrix}
        y_{i1}\\
        y_{i2}
    \end{matrix}\right],\quad 2\leq i\leq 7
\end{equation*}
where $y_i=(y_{i1},y_{i2})$ denotes the coordinates of particle $i$. In the new configuration $x'$, the particle $k$ is on the $x$-axis.

\subsection*{(2) Re-indexing the particles}
We sort the angles $\{\eta(e_x,y_i')\}$, $2\leq i\leq 7$ of the particles and obtain the particle sequence $( y'_{\tau(2)},\dots,y'_{\tau(7)})$, where $(\tau(2),\dots,\tau(7))$ denote the indices of the sorted particles.

Therefore, the transformation function is defined as: $\mathcal{T}(x)=(y'_{\tau(2)},\dots,y'_{\tau(7)})$. In Fig.~\ref{figA2}, we use two example clusters of the Lennard-Jones system to illustrate the above procedure for $\mathcal{T}(x)$.

To incorporate the transformation function $\mathcal{T}(x)$ into the critic and actor networks. We compute the critic function $Q_{\theta}$ and actor function $\mu_{\theta}$ at the state $s\in\R^{12}$ and action $a$ which is a unit vector in $\R^{12}$ as follows. We compute $s'=\mathcal{T}(s)$ with rotation matrix $D_{\beta}$ and indices $(\tau(2),\dots,\tau(7))$. Then we compute the transformed action $a'$ from $a$ using the same rotation $D_{\beta}$ and index function $\tau$. The critic function at $(s,a)$ is given by $Q_{\theta}(s',a')$.

Denote $\tilde{a}=\mu_{\theta}(s')$ and let $\tau^{-1}$ be the inverse function of $\tau$. We compute the transformed action  $\tilde{a}'$ from $\tilde{a}$ using the rotation $D_{-\beta}$ and index function $\tau^{-1}$.
The actor function at $s$ is given by $\tilde{a}'$.

\end{document}